\definecolor{DarkGray}{rgb}{0.1,0.1,0.5}
\def\place #1#2#3{\mspace{#2}\makebox[0pt]{\raisebox{#3}{#1}}\mspace{-#2}}	% the second argument should be in mu, and the third argument in pt
\renewcommand{\todo}[1]{}
\newtheorem{thm}{Theorem}
\newtheorem{cor}[thm]{Corollary}
\newtheorem{prp}[thm]{Proposition}
\newtheorem{lem}[thm]{Lemma}
\newtheorem{definition}[thm]{Definition}
\newtheorem{clm}[thm]{Claim}
\theoremstyle{remark}
\newcommand{\refsec}[1]{\hyperref[sec:#1]{{Section~\ref*{sec:#1}}}}
\newcommand{\refalg}[1]{\hyperref[alg:#1]{{(Algorithm~\ref*{alg:#1})}}}
\newcommand{\reffig}[1]{\hyperref[fig:#1]{{Figure~\ref*{fig:#1}}}}
\newcommand{\refeqn}[1]{\hyperref[eqn:#1]{{(\ref*{eqn:#1})}}}
\newcommand{\reftbl}[1]{\hyperref[tbl:#1]{{Table~\ref*{tbl:#1}}}}
\newcommand{\refapp}[1]{\hyperref[app:#1]{{Appendix~\ref*{app:#1}}}}
\newcommand{\refdef}[1]{\hyperref[def:#1]{{Definition~\ref*{def:#1}}}}
\newcommand{\refthm}[1]{\hyperref[thm:#1]{{Theorem~\ref*{thm:#1}}}}
\newcommand{\reflem}[1]{\hyperref[lem:#1]{{Lemma~\ref*{lem:#1}}}}
\newcommand{\refcor}[1]{\hyperref[cor:#1]{{Corollary~\ref*{cor:#1}}}}
\newcommand{\refprp}[1]{\hyperref[prp:#1]{{Proposition~\ref*{prp:#1}}}}
\newcommand{\refclm}[1]{\hyperref[clm:#1]{{Claim~\ref*{clm:#1}}}}
\newcommand{\R}{\mathbf R}
\newcommand{\C}{\mathbf C}
\newcommand{\N}{\mathbf N}
\newcommand{\Z}{\mathbf Z}
\newcommand{\eps}{\varepsilon}
\newcommand{\cD}{{\cal D}}
\newcommand{\cH}{{\cal H}}
\newcommand{\tcH}{{\cal \tilde H}}
\newcommand{\cP}{{\cal P}}
\newcommand{\cC}{{\cal C}}
\newcommand{\Vf}{V_{\mathrm{free}}}
\newcommand{\wf}{w_{\mathrm{free}}}
\newcommand{\bra}[1]{{\langle#1|}}
\newcommand{\ket}[1]{{|#1\rangle}}
\newcommand{\braket}[2]{{\langle#1|#2\rangle}}
\newcommand{\ketbra}[2]{{\ket{#1}\!\bra{#2}}}
\newcommand{\abs}[1]{{\lvert #1\rvert}}	% since the delimiters do not scale, it might be a good idea to add a dummy {} at the end, so \abs{big expression}^2 has the superscript at a low height
\newcommand{\norm}[1]{{\| #1 \|}}
\newcommand{\pfstart}{\begin{proof}} 
\newcommand{\pfend}{\end{proof}}
\DeclareMathOperator{\spn}{span}
\DeclareMathOperator{\wsize}{wsize}
\def\adjoint{\dagger}
\newcommand{\binomial}[2]{\ensuremath{\left(\begin{smallmatrix}#1 \\ #2 \end{smallmatrix}\right)}}
\newcommand{\target}{\tau}
\title{
\vspace{-1.0cm}
Span programs and quantum algorithms \\ for $st$-connectivity and claw detection}
\author{Aleksandrs Belovs\thanks{Faculty of Computing, University of Latvia, Raina bulv. 19, Riga, LV-1586, Latvia. \texttt{stiboh@gmail.com}.}\and Ben W. Reichardt\thanks{Department of Electrical Engineering, University of Southern California.  \texttt{ben.reichardt@usc.edu}.}}
\date{}
\begin{document}
\maketitle

\vspace{-.2cm}
\begin{abstract}
We introduce a span program that decides $st$-connectivity, and generalize the span program to develop quantum algorithms for several graph problems.  First, we give an algorithm for $st$-connectivity that uses $O(n \sqrt d)$ quantum queries to the $n \times n$ adjacency matrix to decide if vertices~$s$ and~$t$ are connected, under the promise that they either are connected by a path of length at most~$d$, or are disconnected.  We also show that if $T$ is a path, a star with two subdivided legs, or a subdivision of a claw, its presence as a subgraph in the input graph~$G$ can be detected with~$O(n)$ quantum queries to the adjacency matrix.  Under the promise that~$G$ either contains~$T$ as a subgraph or does not contain~$T$ as a minor, we give $O(n)$-query quantum algorithms for detecting~$T$ either a triangle or a subdivision of a star.  All these algorithms can be implemented time efficiently and, except for the triangle-detection algorithm, in logarithmic space.  
One of the main techniques is to modify the $st$-connectivity span program to drop along the way ``breadcrumbs," which must be retrieved before the path from~$s$ is allowed to enter~$t$.  
\end{abstract}

\vspace{-.1cm}
\section{Introduction} \label{sec:introduction}

Span programs are a linear-algebraic way of specifying boolean functions~\cite{KarchmerWigderson93span}.  They are equivalent to quantum query algorithms; the least span program witness size for a boolean function is within a constant factor of the bounded-error quantum query complexity~\cite{Reichardt09spanprogram_arxivandfocs, Reichardt10advtight}.  To date, quantum algorithms have been developed based on span programs for formula evaluation~\cite{ReichardtSpalek08spanprogram, Reichardt09andorfaster, Reichardt09unbalancedformula}, matrix rank~\cite{Belovs11rank}, subgraph detection~\cite{Belovs11triangle, Zhu11learninggraph, LeeMagniezSantha11learninggraph}, and the $k$-distinctness problem under a certain promise~\cite{BelovsLee11learninggraph}.  

In this paper we give two new applications for span programs.  First, we present a new quantum algorithm for the $st$-connectivity problem, that uses exponentially less space and runs faster in many cases than the previous best algorithm.  Second, we give a quantum algorithm for detecting arbitrary fixed paths and claws in a graph.  All of our algorithms can be implemented time efficiently.

\vspace{-.1cm}
\paragraph{Quantum algorithm for deciding $st$-connectivity.}

In the (undirected) $st$-connectivity problem, we are given an undirected $n$-vertex graph $G$ with two selected vertices $s$ and~$t$.  $G$ is given by its adjacency matrix, i.e., the $n\times n$ symmetric matrix $(x_{ij})$, where~$x_{ij} = 1$ if the edge $(i, j)$ is present in the graph, and~$x_{ij} = 0$ otherwise.  The task is to determine whether there is a path from $s$ to~$t$ in $G$.  This problem is also known as {USTCON} or {UPATH}.  Classically, it can be solved in quadratic time by a variety of algorithms.  Its randomized query complexity is $\Theta(n^2)$.  With more time, it can be solved in logarithmic space~\cite{Reingold05stconn, AleliunasKarpLiptonLovaszRackoff79upathRL}.  

D{\"u}rr {\em et al.}\ have given a quantum algorithm for $st$-connectivity that makes $O(n^{3/2})$ queries to the adjacency matrix~\cite{DurrHeiligmanHoyerMhalla04graph}.  In fact, with an approach based on Bor\r{u}vka's algorithm~\cite{Boruvka26graph}, they solve a more general problem and find a minimum spanning tree in~$G$, i.e., a cycle-free edge set of maximal cardinality that has minimum total weight.  In particular, the algorithm outputs a list of the connected components of the graph.  The algorithm's time complexity is also $O(n^{3/2})$ up to logarithmic factors.  The algorithm works by executing a quantum subroutine that uses $O(\log n)$ qubits and requires coherently addressable access to $O(n \log n)$ classical bits, or quantum RAM~\cite{GiovannettiLloydMaccone07quantumRAM}.  This memory is changed classically between runs of the quantum subroutine.  

Our algorithm has the same time complexity as that of D\"urr {\em et al.}\ in the worst case, and has only logarithmic space complexity.  Moreover, the time complexity reduces to $\tilde O(n \sqrt d)$, if it is known that the shortest path between $s$ and $t$, if the one exists, has length at most $d$.  (The $\tilde O$ notation suppresses poly-logarithmic factors.)  Note, though, that our algorithm only detects the presence of a path, and does not output a path.  

The algorithm has a very simple form.  It works in the Hilbert space $\C^{\binomial{n}{2}}$, with one dimension per possible edge in the graph.  It alternates two reflections.  The first reflection applies a coherent query to the adjacency matrix input in order to add a phase of $-1$ to all edges not present in~$G$.  The second reflection is a reflection about all balanced flows from $s$ to~$t$ in the complete graph~$K_n$.  The second reflection is equivalent to reflecting about the constraints that for every vertex $v \notin \{s, t\}$, the net flow into~$v$ must be zero.  This is difficult to implement directly because constraints for different vertices do not commute with each other.  Our time-efficient procedure essentially works by inserting a new vertex in the middle of every edge.  Balanced flows in the new graph correspond exactly to balanced flows in the old graph, but the new graph is bipartite.  We reflect separately about the constraints in each half of the bipartition, and combine these reflections with a phase-estimation subroutine.

\paragraph{Subgraph containment graph properties.}

Using the $st$-connectivity algorithm as a subroutine and the color-coding technique from~\cite{AlonYusterZwick95colorcoding}, we can give an optimal algorithm for detecting the presence of a length-$k$ path in a graph~$G$ given by its adjacency matrix.  Assign to each vertex of~$G$ a label or ``color" chosen from $\{1, 2, \ldots, k+1\}$, independently and uniformly at random.  Discard the edges of~$G$ except those between vertices with consecutive colors.  Add two new vertices $s$ and $t$, and join $s$ to all vertices of color~$1$, and~$t$ to all vertices of color $k+1$.  If there is a path from~$s$ to~$t$ in the resulting graph~$H$, then $G$ contains a length-$k$ path.  Conversely, if $G$ contains a length-$k$ path, then with probability at least $2 (k+1)^{-k-1} = \Omega(1)$ the vertices of the path are colored consecutively, and hence $s$ and $t$ are connected in~$H$.  The algorithm's query complexity is $O(n \sqrt d) = O(n)$, using $d = k+3$.  The previous best quantum query algorithms for deciding if a graph contains a length-$k$ path use $\tilde O(n)$ queries for $k \leq 4$, $\tilde O(n^{3/2 - 1/(\lceil k/2\rceil-1)})$ queries for~$k \geq 9$, and certain intermediate polynomials for $5 \leq k \leq 8$~\cite{ChildsKothari10graphproperties}.  

Path detection is a special case of the problem of deciding whether~$G$ contains as a subgraph a certain fixed graph~$T$.  Algorithms applicable to general~$T$, with complexities depending on the number of vertices and their degrees in~$T$, have been given by~\cite{MagniezSanthaSzegedy05triangle, LeeMagniezSantha11learninggraph}.  A useful case is when $T$ is a triangle.  Quantum query algorithms for triangle-finding have improved from using $O(n^{3/2})$ queries, by Grover's algorithm, to $\tilde O(n^{1.3})$ queries~\cite{MagniezSanthaSzegedy05triangle}, to $O(n^{1.\overline{296}})$ queries~\cite{Belovs11triangle}.  Another case that has been studied is for $T$ a subdivision of a claw.  For detecting a $\{k_1, k_2, k_3\}$-claw, i.e., the star~$K_{1,3}$ with three legs subdivided into paths of lengths~$k_1$, $k_2$ and~$k_3$, Childs and Kothari have given an $\tilde O(n^{3/2-2/(k_1+k_2+k_3-1)})$-query algorithm if all $k_j$s are even, with a similar expression if any of them is odd~\cite{ChildsKothari10graphproperties}.  The best known lower bound for all these problems is just $\Omega(n)$ (see \refprp{optimal}).  

Subgraph-containment properties are a special case of {\em forbidden subgraph properties}, i.e., properties characterized by a finite set of forbidden subgraphs.  Another class of graph properties are {\em minor-closed} graph properties, i.e., properties that if satisfied by a graph $G$ are also satisfied by all minors of~$G$.  Natural examples include whether the graph is acyclic, and whether it can be embedded in some surface.  The properties of not containing a length-$k$ path or a $\{k_1, k_2, k_3\}$-claw are also minor closed.  Robertson and Seymour have famously shown that any minor-closed property can be described by a finite set of forbidden minors~\cite{RobertsonSeymour04wagner}.  They also have developed a cubic-time deterministic algorithm for solving any minor-closed graph property~\cite{RobertsonSeymour95minoralgorithm}.  Childs and Kothari have shown that the quantum query complexity of a minor-closed graph property is $\Theta(n^{3/2})$ unless the property is a forbidden subgraph property, in which case it is $o(n^{3/2})$~\cite{ChildsKothari10graphproperties}.  

We make further progress on characterizing the quantum query complexity of minor-closed forbidden subgraph properties.  In particular, we show that a minor-closed property can be solved by a quantum algorithm that uses $O(n)$ queries and $\tilde O(n)$ time if it is characterized by a {\em single} forbidden subgraph.  The graph is then necessarily a collection of disjoint paths and subdivided claws.  This query complexity is optimal.  The algorithm for these cases is a generalization of the $st$-connectivity algorithm.  It still checks connectivity in a certain graph built from~$G$, but also pairs some of the edges together so that if one edge in the pair is used then so must be the other.  Roughly, it is as though the algorithm drops breadcrumbs along the way that must be retrieved before the path is allowed to enter~$t$.  

For an example of the breadcrumb technique, consider the problem of deciding whether~$G$ contains a triangle.  We might attempt to solve this problem by first randomly coloring the vertices of~$G$ by $\{ 1, 2, 3 \}$.  Discard edges between vertices of the same color and make two copies of each vertex of color~$1$, the first connected to color-$2$ vertices and the second connected to color-$3$ vertices.  Connect $s$ to all the first vertices of color~$1$ and connect~$t$ to all the second vertices of color~$1$, and ask if~$s$ is connected to~$t$.  Unfortunately, this will give false positives.  If $G$ is a path of length four, with vertices colored $1, 2, 3, 1$, then~$s$ will be connected to~$t$ even though~$G$ does not contain a triangle.  To fix it, we can drop a breadcrumb at the first color-$1$ vertex and require that it be retrieved after the color-$3$ vertex; then the algorithm will no longer accept this path.  The algorithm still does not work, though, because it will accept a cycle of length five colored $1, 2, 3, 2, 3$.  Since the $st$-connectivity algorithm works in undirected graphs, it cannot see that the path backtracked from color~$3$ to color~$2$; graph minors can fool the algorithm.  What we can say is that in this particular case, the technique gives an $O(n)$-query and $\tilde O(n)$-time quantum algorithm that detects whether~$G$ contains a triangle or is acyclic, i.e., does not contain a triangle as a minor, under the promise that one of the two cases holds.

\paragraph{Organization.}

After some necessary background, in \refsec{st}, we present the algorithm for $st$-connectivity, and analyze its query complexity.  In \refsec{subgraph}, we define the subgraph/not-a-minor promise problem, and solve it for the cases when the subgraph is a subdivided star or a triangle.  In \refsec{K5}, though, we show that the technique does not work for arbitrary subgraphs.  In \refsec{efficient}, we present a framework for span program evaluation, and prove that the above algorithms can be implemented time efficiently.  Finally, in \refsec{startwo}, we generalize the reduction given above for path detection and give an $O(n)$-query quantum algorithm for detecting as a subgraph a star with two subdivided~legs.

\section{Preliminaries} \label{sec:prelim}

Let $[n]$ denote the set $\{1,\dots,n\}$.  Let $\cC(A)$ denote the range or column space of a matrix~$A$.

\subsection{Graph theory}

Let $K_n$ be the complete graph on~$n$ vertices, and let $K_{m,n}$ be the complete bipartite graph on $m$ and~$n$ vertices.  A star is a complete bipartite graph $K_{1,m}$, and a claw is the star~$K_{1,3}$.  All graphs we consider are simple.  

A graph $T$ is said to be a {\em subgraph} of a graph $G$, if $T$ can be obtained from $G$ by deleting edges and isolated vertices. $T$ is said to be a {\em minor} of $G$, if it can be obtained from $G$ by deleting and contracting edges, and deleting isolated vertices. Contracting an edge $(x, y)$ involves replacing $x$ and $y$ by a new vertex that is adjacent exactly to the union of the neighbors of $x$ and $y$.

There is an alternative way of describing the minor relation.  Let $H$ be a graph, and $\{V_x\}$, where~$x$ runs through all the vertices of $T$, be a collection of connected and pairwise disjoint subsets of the vertices of~$H$.  We write $H = M T$ if the following holds: there is an edge $(u, v)$ in $T$ if and only if there is an edge between a vertex of $V_x$ and a vertex of $V_y$ in~$H$.  If this holds, the sets $V_x$ are called the {\em branch sets} of $MT$.  A graph~$T$ is contained in $G$ as a minor if and only if some $MT$ is contained in $G$ as a subgraph.

\subsection{Quantum computation and span programs} \label{sec:span}

We are interested in both query and time complexity of quantum algorithms.  Query complexity measures only the number of queries to the input oracle, whereas time complexity measures the total number of gates.  For a survey of query complexity, refer to~\cite{BuhrmanDeWolf02querysurvey}.

We develop quantum algorithms based on span programs over the real numbers.  

\begin{definition}[Span program~\cite{KarchmerWigderson93span}] \label{t:spanprogramdef}
A span program $\cP = (n, d, \ket \target, V_{\text{free}}, \{V_{i,b}\})$ consists of a ``target" vector $\ket \target \in \R^d$ and finite sets $V_{\text{free}}$ and $V_{1,0}, V_{1,1}, \ldots, V_{n,0}, V_{n,1}$ of ``input" vectors from $\R^d$.  

To $\cP$ corresponds a boolean function $f_\cP : \{0,1\}^n \rightarrow \{0,1\}$, defined by $f_\cP(x) = 1$ if and only if $\ket \target$ lies in the span of the vectors in $V_{\text{free}} \cup \bigcup_{i=1}^n V_{i, x_i}$.  
\end{definition}

We say that the input vectors in $V_{i,b}$ are {\em labeled} by the value $b$ of the $i$th input variable~$x_i$.  
For an input $x = (x_i)\in\{0,1\}^n$, define the {\em available input vectors} as the vectors in $V_{\text{free}} \cup \bigcup_{i \in [n]} V_{i, x_i}$.  The other input vectors are called the {\em false input vectors}.  Let $V$, $V(x)$ and $\Vf$ be matrices having as columns the input vectors, the available input vectors and the free input vectors in $V_{\text{free}}$, respectively.  The span program evaluates to $1$ on input $x$ if and only if $\ket \target \in \cC(V(x))$.  

A useful notion of span program complexity is the {\em witness size}.  
\begin{itemize}\itemsep0pt
\item
If $\cP$ evaluates to $1$ on input $x$, a {\em witness} for this input is a pair of vectors $\ket w$ and $\ket{\wf}$ such that $\Vf \ket{\wf}+ V(x) \ket w = \ket \target$.  Its witness size is defined as $\norm{\ket w}{}^2$. 
\item
If $f_\cP(x) = 0$, then a witness for~$x$ is any vector $\ket{w'} \in \R^d$ such that $\braket{w'}{\target} = 1$ and $\ket{w'} \perp \cC(V(x))$.  Since $\ket \target \notin \spn(V(x))$, such a vector exists.  The witness size of $\ket{w'}$ is defined as $\|V^\adjoint \ket{w'}\|^2$.  This equals the sum of the squares of the inner products of $\ket{w'}$ with all false input vectors.  
\end{itemize}

The witness size of span program $\cP$ on input $x$, $\wsize(\cP, x)$, is defined as the minimal size among all witnesses for $x$.  For $\cD \subseteq \{0,1\}^n$, let 
\begin{equation}
\wsize_b(\cP,\cD) = \max_{x\in \cD: f_\cP(x) = b} \wsize(\cP, x) \enspace .  
\end{equation}
Then the witness size of $\cP$ on domain~$\cD$ is defined as
\begin{equation} \label{eqn:wsize}
\wsize(\cP,\cD) = \sqrt{\wsize_0(\cP,\cD)\wsize_1(\cP,\cD)} \enspace .  
\end{equation}
This is equivalent to the standard definition; see Eq.~(2.8) in~\cite{Reichardt09spanprogram_arxivandfocs}.

Span programs can be converted into quantum query algorithms: 

\begin{thm}[\cite{Reichardt09spanprogram_arxivandfocs, Reichardt10advtight}] \label{thm:span}
For any boolean function $f\colon \cD \to \{0,1\}$, with $\cD \subseteq \{0,1\}^n$, if $\cP$ is a span program computing $f$ on domain~$\cD$, then there exists a quantum algorithm that evaluates $f$ with two-sided bounded error using $O(\wsize(\cP, \cD))$ queries.  
\end{thm}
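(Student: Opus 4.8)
The plan is to compile the span program $\cP$ into a single quantum walk---a product of two reflections---whose spectrum, probed at one fixed starting state, already decides $f_\cP$; this is the span-program-to-algorithm conversion of~\cite{Reichardt09spanprogram_arxivandfocs, Reichardt10advtight}, and I will only outline it. First I would normalize: replacing $\ket\target$ by $\mu\ket\target$ multiplies every positive witness size by $\mu^2$ and every negative witness size by $\mu^{-2}$, so for a suitable $\mu$ we may assume $\wsize_0(\cP,\cD) = \wsize_1(\cP,\cD) = W := \wsize(\cP,\cD)$ (and, after a routine reduction, that the free parts $\ket{\wf}$ of minimal positive witnesses have squared norm $O(W)$ as well). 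Next I would set up the walk. Let $A$ be the matrix whose columns are $-\ket\target$ together with all input vectors of $\cP$, and work in the coefficient space $\cH = \R \oplus \R^{m}$, where $m$ is the number of input vectors; write $\ket 0$ for the first basis vector (the ``target coordinate''). Let $\cH(x) \subseteq \cH$ be spanned by $\ket 0$ and by the coordinates of the available input vectors (those in $V_{\text{free}}$ or labeled by $x_i$). Define
\[
R_A = 2\proj_{\ker A} - I, \qquad R_x = 2\proj_{\cH(x)} - I, \qquad U = R_A R_x ,
\]
with starting state $\ket 0$. Here $R_A$ is input-independent, and $R_x$---which merely puts a phase on the coordinates labeled by falsified literals---costs $O(1)$ queries to $x$, so each application of $U$ costs $O(1)$ queries.

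The heart of the argument is two spectral facts. If $f_\cP(x) = 1$, a minimal positive witness $\Vf\ket{\wf} + V(x)\ket w = \ket\target$ assembles into $\ket\psi := \ket 0 \oplus \ket{\wf} \oplus \ket w$; then $A\ket\psi = -\ket\target + \Vf\ket{\wf} + V(x)\ket w = 0$ and $\ket\psi \in \cH(x)$, so $\ket\psi$ is fixed by both reflections, $U\ket\psi = \ket\psi$, and $\abs{\braket{0}{\psi}}^2/\norm{\ket\psi}^2 = 1/\bigl(1 + \norm{\ket{\wf}}^2 + \norm{\ket w}^2\bigr) = \Omega(1/W)$; so in the positive case $\ket 0$ has overlap $\Omega(1/\sqrt W)$ with the $1$-eigenspace of $U$. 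If $f_\cP(x) = 0$, a minimal negative witness $\ket{w'}$ (with $\braket{w'}{\target} = 1$, $\ket{w'} \perp \cC(V(x))$, and $\norm{V^\adjoint\ket{w'}}^2 \le W$) yields $\ket u := A^\adjoint\ket{w'}$, which lies in $\im A^\adjoint = (\ker A)^\perp$, so $\proj_{\ker A}\ket u = 0$; moreover $\proj_{\cH(x)}\ket u = -\ket 0$, since $\ket{w'}$ is orthogonal to every available input vector while $\braket{w'}{\target} = 1$. The \emph{effective spectral gap lemma}, applied to $U = (2\proj_{\ker A} - I)(2\proj_{\cH(x)} - I)$ with this $\ket u$, then gives $\norm{P_\Theta\ket 0} = \norm{P_\Theta\proj_{\cH(x)}\ket u} \le \tfrac12\Theta\norm{\ket u} \le \tfrac12\Theta\sqrt{1+W}$, where $P_\Theta$ projects onto the eigenvectors of $U$ of eigenphase at most $\Theta$; hence for $\Theta = \Theta(1/\sqrt W)$ this is a small constant. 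In other words, seen from $\ket 0$, $U$ has an effective spectral gap of order $1/\sqrt W$ around~$1$.

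Finally I would extract the answer: run phase estimation of $U$ on $\ket 0$ to precision $\Theta(1/\sqrt W)$ and constant accuracy, which costs $O(\sqrt W)$ applications of $U$, hence $O(\sqrt W)$ queries. In the positive case the phase register reads $\approx 0$ with probability $\Omega(1/W)$ (from the $1$-eigenvector overlap); in the negative case only with a small constant probability (from the effective gap, after a standard step to suppress the tails of phase estimation). Amplitude amplification over $O(\sqrt W)$ rounds of this test separates the cases to constant bias, for $O(\sqrt W \cdot \sqrt W) = O(W) = O(\wsize(\cP,\cD))$ queries in total; the two $\sqrt W$ factors come from the negative-case gap and the positive-case overlap respectively, and their product is independent of the normalization~$\mu$. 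I expect the main obstacles to be the effective spectral gap lemma itself and the bookkeeping needed to interleave phase estimation with amplitude amplification correctly---in particular, choosing the precision fine enough that amplification cannot spuriously accept in the negative case, and handling free input vectors; making $R_A$ and $R_x$ \emph{time}-efficient is a separate matter, treated in \refsec{efficient}.
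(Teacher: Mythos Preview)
Your framework matches the paper's---work in the coefficient space, alternate the reflection about $\ker A$ with the input-dependent sign flip $R_x$, and invoke the effective spectral gap lemma---but the final accounting has a gap. With your normalization $W_0 = W_1 = W$ and precision $\Theta = \Theta(1/\sqrt W)$, the lemma only gives $\norm{P_\Theta\ket 0} \le \tfrac{\Theta}{2}\sqrt{1+W} = O(1)$ in the negative case, whereas the positive-case overlap with the $1$-eigenspace is $\Omega(1/\sqrt W)$. So the negative-case acceptance probability, a small but fixed constant, \emph{exceeds} the positive-case lower bound $\Omega(1/W)$ once $W$ is large, and $O(\sqrt W)$ rounds of amplitude amplification will drive the negative case toward acceptance as well. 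Forcing $p_- \ll p_+$ so that amplification could separate the cases requires $\Theta = O(1/W)$, hence $O(W)$ queries per phase-estimation call and $O(W^{3/2})$ total after the extra $O(\sqrt W)$ amplification rounds. The ``product of two $\sqrt W$ factors, independent of $\mu$'' heuristic breaks down exactly here.

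The paper's fix is a different normalization that removes the amplification layer entirely. It places $\ket\target/\alpha$ rather than $-\ket\target$ as the $0$th column, with $\alpha = C_1\sqrt{W_1}$. The positive-case $1$-eigenvector is then $\alpha\ket 0 - \sum_j w_j\ket j$, whose overlap with $\ket 0$ is $\alpha/\sqrt{\alpha^2 + W_1} = \Omega(1)$, a constant. A single phase estimation at precision $\Theta = \Theta(1/W)$ now suffices: in the positive case phase~$0$ is seen with constant probability, while in the negative case the lemma applied to $\ket u = \alpha A^\dagger\ket{w'}$ (so that $\Pi_x\ket u = \ket 0$ and $\norm{\ket u}^2 \le 1 + \alpha^2 W_0 = O(W^2)$) bounds $\norm{P_\Theta\ket 0}$ by an arbitrarily small constant. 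Total cost $O(W)$, with no amplitude amplification.
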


A proof is given in \refsec{spanAlgorithm}.  
Conversely, if $f$ can be evaluated by a bounded-error quantum algorithm that makes~$Q$ queries, then there is a span program for~$f$ with $O(Q)$ witness size~\cite{Reichardt09spanprogram_arxivandfocs}.  Thus, searching for good quantum query algorithms is equivalent to searching for span programs with small witness size.

\section{Span program and quantum query algorithm for \texorpdfstring{$st$}{st}-connectivity} \label{sec:st}

A key idea in our arguments will be a simple span program for deciding $st$-connectivity.  We show: 

\begin{thm}
\label{thm:st}
Consider the $st$-connectivity problem on a graph $G$ given by its adjacency matrix.  Assume there is a promise that if $s$ and $t$ are connected by a path, then they are connected by a path of length at most~$d$.  Then the problem can be decided in $O(n\sqrt{d})$ quantum queries.  
\end{thm}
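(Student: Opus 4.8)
The plan is to construct a span program $\cP_{st}$ for $st$-connectivity and bound its witness size by $O(\sqrt d)$, after which \refthm{span} immediately yields the $O(n\sqrt d)$-query algorithm (there are $\binomial{n}{2} = O(n^2)$ input bits, so a witness size of $O(\sqrt d)$ gives $O(n\sqrt d)$ queries). The span program lives in the vector space $\R^V$ spanned by the vertices of $K_n$, with target $\ket\target = \ket s - \ket t$. For each potential edge $(u,v)$ of $K_n$, indexed by the input bit $x_{uv}$, put the single input vector $\ket u - \ket v$ into $V_{uv,1}$ (and nothing into $V_{uv,0}$); there are no free vectors. The key observation is that $\ket s - \ket t$ lies in the span of the available input vectors $\{\ket u - \ket v : x_{uv} = 1\}$ if and only if $s$ and $t$ lie in the same connected component of $G$ — one direction is a telescoping sum along a path, the other follows because the span of $\{\ket u - \ket v\}$ over the edges in a component is exactly the zero-sum subspace of that component, and $\ket s - \ket t$ has support in two different components otherwise. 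So $f_{\cP_{st}} = \mathrm{USTCON}$.

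Next I would bound $\wsize_1$. On a positive instance, a witness is a vector of edge-coefficients $\ket w$ with $\sum_{(u,v): x_{uv}=1} w_{uv}(\ket u - \ket v) = \ket s - \ket t$; interpreting $w_{uv}$ as a flow of $w_{uv}$ units from $u$ to $v$, this says exactly that $\ket w$ is a unit $st$-flow supported on edges of $G$, and its witness size is $\sum_{(u,v)} w_{uv}^2$, the \emph{flow energy}. By the promise there is an $st$-path of length at most $d$; routing one unit of flow along that path gives energy at most $d$, so $\wsize_1(\cP_{st},\cD) \le d$. For $\wsize_0$, on a negative instance $s$ and $t$ are in different components; take $\ket{w'}$ to be the indicator vector $\sum_{v \in \cC_s}\ket v$ of the component $\cC_s$ containing $s$. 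Then $\braket{w'}{\target} = 1 - 0 = 1$ and $\ket{w'}$ is orthogonal to every available input vector $\ket u - \ket v$ (both endpoints in $\cC_s$, or both outside). The witness size $\|V^\adjoint\ket{w'}\|^2$ is the number of false input vectors $\ket u - \ket v$ with $\langle w'|(\ket u - \ket v)\rangle \ne 0$, i.e.\ edges of $K_n$ with exactly one endpoint in $\cC_s$, which is at most $\binomial n2 = O(n^2)$. This is a cut-counting bound and the naive estimate $O(n^2)$ is too weak — combined with $\wsize_1 \le d$ it only gives $\wsize(\cP_{st},\cD) = \sqrt{O(n^2 d)} = O(n\sqrt d)$, which then feeds into \refthm{span} as $O(n^2)\cdot$... no: one must be careful about the normalization convention in the witness-size definition.

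The subtle point — and the step I expect to be the main obstacle — is reconciling the scaling of the two witness sizes so that their product comes out to $O(d)$ rather than $O(n^2 d)$, since \refthm{span} charges $O(\wsize(\cP,\cD))$ queries regardless of $n$. The fix is to rescale the negative witness: replace $\ket{w'}$ by $\tfrac{1}{n}\ket{w'}$ — no, that breaks $\braket{w'}{\target}=1$ unless $t \notin \cC_s$ lets us instead subtract off a multiple of the all-ones vector. Concretely, use $\ket{w'} = \sum_{v\in\cC_s}\ket v - \tfrac{|\cC_s|}{n}\sum_{v\in V}\ket v$, which still satisfies $\braket{w'}{\target} = 1 - 0$ up to the correction and remains orthogonal to $\cC(V(x))$ but now has small inner products (of size $O(1/n)$, or $|\cC_s|/n$) with the $O(n^2)$ false vectors, giving $\wsize_0 = O(n^2 \cdot (1/n)^2) = O(1)$. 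Then $\wsize(\cP_{st},\cD) = \sqrt{O(1)\cdot O(d)} = O(\sqrt d)$, and since the function has $O(n^2)$ variables, \refthm{span} — properly accounting for the $n^2$ possible queries the algorithm must distinguish — yields $O(n\sqrt d)$ queries. I would state the span program cleanly, prove the $f_{\cP}$-correctness lemma, exhibit the path-flow for the $1$-side and the (rescaled) cut-potential for the $0$-side, and conclude by invoking \refthm{span}; the only real care needed is getting the two normalizations to multiply to the claimed bound.
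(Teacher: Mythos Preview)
Your span program and the two witnesses are exactly right, and in fact you already have the complete proof by the end of your second paragraph: $\wsize_1 \le d$, $\wsize_0 = O(n^2)$, hence $\wsize(\cP_{st},\cD) = O(n\sqrt d)$, and \refthm{span} gives an $O(n\sqrt d)$-query algorithm. This is precisely the paper's argument.

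The gap is a misreading of \refthm{span}. That theorem says the quantum query complexity is $O(\wsize(\cP,\cD))$, full stop; the number of input variables does not enter as a separate multiplicative factor. So you do not need $\wsize = O(\sqrt d)$; you need $\wsize = O(n\sqrt d)$, which is what the straightforward bounds already give. Your third paragraph is therefore trying to solve a non-problem.

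Moreover, the ``fix'' you propose there cannot work in principle: every input vector has the form $\ket u - \ket v$ and is orthogonal to the all-ones vector $\sum_{v \in V}\ket v$, so subtracting any multiple of the all-ones vector from $\ket{w'}$ leaves $V^\adjoint\ket{w'}$, and hence the negative witness size, completely unchanged. In particular the inner product with any edge crossing the $(\cC_s,\overline{\cC_s})$ cut is still $\pm 1$, not $O(1/n)$. Delete the third paragraph, re-read the statement of \refthm{span}, and you are done.
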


In \refsec{efficient}, we will prove that this algorithm can be implemented in $\tilde O(n\sqrt{d})$ time and $O(\log n)$ space.

\pfstart
Define a span program $\cP$ using the vector space~$\R^n$, with the vertex set of $G$ as an orthonormal basis.  The target vector is $\ket t - \ket s$.  For each pair of distinct vertices $\{u, v\}$, order the vertices arbitrarily and add the input vector $\ket u - \ket v$ labeled by the presence of the edge~$(u, v)$, i.e., $\ket u - \ket v$ is available when the $(u,v)$ entry of the adjacency matrix is~$1$.  The edge orientations are not important since $\ket v - \ket u = -(\ket u - \ket v)$.  

When $s$ is connected to~$t$ in~$G$, let $t = u_0, u_1, \dots, u_m = s$ be a path between them of length $m \le d$.  All vectors $\ket{u_i} - \ket{u_{i+1}}$ are available, and their sum is $\ket t - \ket s$.  Thus the span program evaluates to~$1$.  The witness size is at most $m \leq d$.  

Next assume that $t$ and $s$ are in different connected components of $G$.  Define $\ket{w'}$ by $\langle w',u\rangle = 1$ if~$u$ is in the connected component of $t$, and 0 otherwise.  Then $\langle w', t-s \rangle = 1$ and $\ket{w'}$ is orthogonal to all available input vectors.  Thus the span program evaluates to~$0$ with $\ket{w'}$ a witness.  Since there are $O(n^2)$ false input vectors, and the inner product of each of them with $\ket{w'}$ is at most 1, the negative witness size is $O(n^2)$.  

$\cP$'s witness size is thus $O(n\sqrt{d})$.  By \refthm{span}, the problem's quantum query complexity is~$O(n \sqrt{d})$.
\pfend

It is easy to see that the problem's query complexity is $1$ if $d = 1$ and is $O(\sqrt{n})$ if $d = 2$.  If~$d \geq 3$, and~$d = O(1)$, then the algorithm of \refthm{st} is optimal, which can be seen by a reduction from the unordered search problem.  The algorithm is also optimal if $d = \Theta(n)$, again by the lower bound from~\cite{DurrHeiligmanHoyerMhalla04graph}.  

Observe that when $s$ and~$t$ are connected, the span program's witnesses correspond exactly to balanced unit flows from~$s$ to~$t$ in~$G$.  The witness size of a flow is the sum over all edges of the square of the flow across that edge.  If there are multiple simple paths from~$s$ to~$t$, then it is therefore beneficial to spread the flow across the paths in order to minimize the witness size.  The optimal positive witness size is the same as the resistance distance between~$s$ and~$t$, $R_{st} \leq d$, i.e., the effective resistance, or equivalently twice the energy dissipation of a unit electrical flow, when each edge in the graph is replaced by a unit resistor~\cite{DoyleSnell84electricnetworks}.  Spreading the flow to minimize its energy is the main technique used in the analysis of quantum query algorithms based on learning graphs~\cite{Belovs11triangle, Zhu11learninggraph, LeeMagniezSantha11learninggraph, BelovsLee11learninggraph}, for which this span program for $st$-connectivity can be seen to be the key subroutine.  Since the negative witness size is $O(n^2)$, the overall witness size is $O(n \sqrt{\max R_{st}})$, where the maximum is over allowed input graphs.  Notice that the hitting time from~$s$ to~$t$ for a classical random walk is at most $2 m R_{st}$, where~$m$ is the number of edges in~$G$~\cite{ChandraRaghavanRuzzoSmolenskyTiwari89resistance}.  A quantum walk that is given~$G$ achieves a square-root speedup in the hitting time~\cite{MagniezNayakRichterSantha08hittingtimes, Szegedy04walkfocs}; our algorithm is only slower by a factor of $O(n / \sqrt m)$, even though it is charged for accessing the input graph.

\section{Subgraph/not-a-minor promise problem} \label{sec:subgraph}

A natural strategy for deciding a minor-closed forbidden subgraph property is to take the list of forbidden subgraphs and test the input graph $G$ for each subgraph one by one.  Let $T$ be a forbidden subgraph from the list.  To simplify the problem of detecting~$T$, we can add the promise that $G$ either contains $T$ as a subgraph or does not contain~$T$ \emph{as a minor}.  Call this problem the subgraph/not-a-minor promise problem for~$T$.

In this section, we develop an approach to the subgraph/not-a-minor problem using span programs.  We first show that the approach achieves the optimal $O(n)$ query complexity in the case that~$T$ is a subdivided star.  Then in \refsec{triangle} we extend the approach to give an optimal $O(n)$-query algorithm for the case that~$T$ is a triangle.  In \refsec{K5}, however, we show that the approach fails for the case $T = K_5$.  

Before beginning, we state a lower bound that proves the optimality of these algorithms: 

\begin{prp} \label{prp:optimal}
If the graph $T$ has at least one edge, then the quantum query complexity of the subgraph/not-a-minor problem for $T$ is $\Omega(n)$, and the randomized query complexity is $\Omega(n^2)$.
\end{prp}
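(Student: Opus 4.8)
The plan is to reduce the unstructured search problem --- equivalently, $\OR$ --- on $N = \Theta(n^2)$ bits to the subgraph/not-a-minor problem for $T$. Unstructured search on $N$ bits needs $\Omega(\sqrt N)$ quantum queries and $\Omega(N)$ randomized queries, so a reduction with $N = \Theta(n^2)$ produces both claimed bounds. (When $T = K_2$ the two problems coincide outright: deciding whether $G$ has an edge is $\OR$ on $\binom{n}{2}$ bits; so I will assume $T$ has more structure and the promise is nonvacuous.)

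Fix an edge $e = (a,b)$ of $T$. The reduction is organized around a fixed ``base graph'' $G_0$ on the $n$ vertices, together with a set $F$ of $\Theta(n^2)$ non-edges of $G_0$, chosen so that: (i) $G_0$ contains no copy of $T$ as a minor; and (ii) for every $f \in F$, the graph $G_0 + f$ contains $T$ as a subgraph. Given such a gadget, an $\OR$ instance indexed by $F$ is encoded by the graph $G = G_0 \cup \{\, f \in F : f \text{ is marked} \,\}$: if no bit is marked then $G = G_0$ has no $T$-minor, and if some bit is marked then $G$ contains $G_0 + f$, hence $T$, as a subgraph. Either way the promise holds, so a $T$-detecting algorithm run on $G$ decides the $\OR$ instance, each oracle call being answered by one adjacency-matrix query (entries of $G$ outside $F$ are fixed and known to the reduction).

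For (ii) it suffices that $G_0 + f$ literally contain $T - e$ together with the edge $e$ on some $|V(T)|$ of its vertices. The first construction to try: take one copy of $T$, delete $e$, replace $a$ by a large independent ``twin class'' $A$ and $b$ by a large independent twin class $B$ --- each vertex of $A$ adjacent to the copy of $N_T(a) \setminus \{b\}$, each vertex of $B$ adjacent to the copy of $N_T(b) \setminus \{a\}$ --- pad with isolated vertices up to $n$, and set $F = \{\, (\alpha, \beta) : \alpha \in A,\ \beta \in B \,\}$, so $|F| = \Theta(n^2)$. Adding $(\alpha, \beta)$ rebuilds $T$ with $a \mapsto \alpha$ and $b \mapsto \beta$, giving (ii).

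The hard part is (i): the blow-up can accidentally acquire a $T$-minor, or even a $T$-subgraph --- if $T$ is a star, attaching a large twin class to a fixed vertex yields a vertex of unbounded degree, and if $T = K_{2,3}$ the two core vertices acquire three common neighbours, so in both cases $G_0$ itself already contains $T$. Hence the gadget must be tailored to the shape of $T$: when $T$ has a leaf, use $\Theta(n)$ disjoint copies of $T$ minus that leaf together with $\Theta(n)$ isolated candidate-leaf vertices, taking $F$ to be the possible attaching edges; when $T$ has minimum degree at least two, use a sparser gadget --- a ``fat-ended path'' or ``friendship-graph''-style construction --- engineered so that no single component is rich enough to host a $T$-minor. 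Verifying (i) in each case rests on two elementary facts: a minor has no more vertices and no more edges than its host, and a connected graph is a minor of a disjoint union only if it is a minor of one of the components. These bound what each component of $G_0$ can contain and force the conclusion that $T$, with its full vertex and edge counts --- and, if $T$ is disconnected, its own component structure --- does not embed. I expect the main obstacle to be making this case analysis uniform, and in particular exhibiting a base graph that is simultaneously rich enough for (ii) and $T$-minor-free for awkward cases such as $K_{2,3}$ or short even cycles.
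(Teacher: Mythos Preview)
Your reduction framework is correct, and you have correctly located the difficulty --- but the proposal does not actually close it, and one of the two concrete constructions you give is wrong.

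The leaf case fails when $T$ is disconnected. Take $T$ to be the disjoint union of an edge $K_2$ and a triangle $K_3$. The only leaves sit on the $K_2$ component, so ``$T$ minus that leaf'' is $K_1 \cup K_3$, and your $G_0$ is a disjoint union of triangles and isolated vertices. Two of those triangles already contain $K_2 \cup K_3$ as a subgraph, so condition~(i) is violated before any $f \in F$ is added. The minimum-degree-$2$ case you leave entirely unspecified; so as it stands the argument is neither complete nor, in the part you do spell out, correct.

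The paper sidesteps the case analysis with one uniform gadget. Let $H$ be the \emph{smallest} connected component of $T$ having at least two vertices, and let $H'$ be $H$ with one vertex --- any vertex, not necessarily a leaf --- removed. Take the base graph to be $(T \setminus H)$ together with $n$ disjoint copies of $H'$ and $n$ isolated vertices. The variable $x_{i,j}$, when set, joins the $j$th isolated vertex to \emph{every} vertex of the $i$th copy of $H'$; this reconstitutes a supergraph of $H$, hence of $T$. An adjacency-matrix query still simulates at most one oracle query to $x$, so the reduction goes through. For minor-freeness, every component of $T$ with at least two vertices has at least $\abs{H}$ vertices, so none can sit inside a copy of $H'$ (only $\abs{H}-1$ vertices) or an isolated vertex; they would all have to embed disjointly inside the $\ge 2$-vertex components of $T \setminus H$, which is impossible by a vertex count. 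Two moves make this uniform where your plan is not: working with the smallest nontrivial component instead of all of $T$, and letting one search bit control a bundle of edges rather than a single edge.
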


\pfstart
This is a standard argument by a reduction from the unordered search problem; see, e.g.,~\cite{BurhmanDurrHeiligmanHoyerMagniezSanthaDeWolf00elementdistinctness}.  Let $H$ be the smallest connected component of $T$ of size at least 2. Let $H'$ be $H$ with a vertex removed.  Let $G$ be constructed as $T \setminus H$ together with $n$ disjoint copies of $H'$ and $n$ isolated vertices.  The graph~$G$ has $O(n)$ vertices and does not contain a $T$-minor.  

Let $x_{i,j}$, for $i, j \in [n]$, be boolean variables.  Define $G(x)$ as $G$ with the $j$th isolated vertex connected to all vertices of the $i$th copy of $H'$, for all $i, j$ such that $x_{i,j} = 1$.  The graph $G(x)$ contains~$T$ as a subgraph if and only if at least one $x_{i,j}$ is~$1$.  This gives the reduction.  Unordered search on $n^2$ inputs requires $\Omega(n)$ quantum queries~\cite{BoyerBrassardHoyerTapp96search} and, clearly, $\Omega(n^2)$ randomized queries.  
\pfend

\subsection{Subdivision of a star} \label{sec:star}

In this section, we give an optimal quantum query algorithm for the subgraph/not-a-minor promise problem for a graph~$T$ that is a subdivided star.  As a special case, this implies an optimal quantum query algorithm for deciding minor-closed forbidden subgraph properties that are determined by a single forbidden subgraph.  

\begin{thm} \label{thm:star}
Let $T$ be a subdivision of a star.  Then there exists a quantum algorithm that, given query access to the adjacency matrix of a simple graph $G$ with $n$ vertices, makes $O(n)$ queries, and, with probability at least~$2/3$, accepts if $G$ contains~$T$ as a subgraph and rejects if $G$ does not contain $T$ as a minor.  
\end{thm}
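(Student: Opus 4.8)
The plan is to build a span program for the subgraph/not-a-minor problem for a subdivided star $T$, generalizing the $st$-connectivity span program of \refthm{st} by augmenting it with the ``breadcrumb'' mechanism described in the introduction, and then apply \refthm{span}. First I would use the color-coding technique of~\cite{AlonYusterZwick95colorcoding}: if $T$ has $k$ vertices, color the vertices of $G$ uniformly at random with $k$ colors so that, with probability $\Omega(1)$, any fixed copy of $T$ in $G$ receives a proper ``rainbow'' coloring matching a fixed labeling of $T$'s vertices. After conditioning on the coloring, the task reduces to detecting a color-consistent homomorphic image of $T$; it suffices to give an $O(n)$-query span program for that colored problem whose witness size is $O(n)$, since $\wsize_0,\wsize_1 = O(1)\cdot O(n^2)$-style bounds give overall $O(n)$ as in \refthm{st}.

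Next I would set up the colored span program. Say $T$ is a star with center $c$ and $\ell$ legs, the $j$-th leg being a path of length $k_j$. Think of $T$ as $\ell$ paths from the center to the $\ell$ leaf endpoints. Following the $st$-connectivity construction, introduce a source $s$ and a sink $t$, wire $s$ to all vertices that could play the role of the first leaf endpoint and $t$ to all vertices that could play the last, and ask for a path from $s$ to $t$ that threads through a common center vertex $c$. The key difficulty is that in an undirected connectivity span program one cannot force the $\ell$ subpaths to all meet at the \emph{same} center vertex, nor prevent a path from backtracking. The breadcrumb idea addresses the first issue: for each candidate center vertex $v$ and each of its incident ``center edges,'' pair those $\ell$ edges together as a single input variable (or tie them with gadget edges that must be traversed together), so that using one forces using the others — effectively the flow is compelled to pass through a single $v$ exactly once per leg. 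Concretely I would put the $\ell$ legs in sequence $s \to (\text{leg }1) \to v \to (\text{leg }2) \to v \to \cdots \to v \to (\text{leg }\ell) \to t$, so a valid $s$--$t$ flow must enter and leave the same vertex $v$ playing the center, $\ell$ times, and the edges at $v$ are bundled so the flow genuinely corresponds to $\ell$ distinct legs emanating from $v$.

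Then I would verify the two directions of correctness. For \textbf{completeness}: if $G$ contains $T$ as a subgraph (and the coloring is rainbow, which happens with constant probability), then the images of $T$'s edges give an available flow from $s$ to $t$ of total length $O(k) = O(1)$, so $f_{\cP} = 1$ with positive witness size $O(1)$ — here I may need to balance/spread the flow as in the $st$-connectivity analysis, but since $T$ is fixed this is a constant. For \textbf{soundness}: if $f_{\cP}(x) = 1$ then there is an available $s$--$t$ flow, hence (by the breadcrumb pairing forcing a common center $v$ and $\ell$ legs leaving $v$) a walk in $G$ that visits, after contracting any backtracking, a subdivided star — i.e., $G$ contains $T$ as a \emph{minor}. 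Contrapositively, if $G$ has no $T$-minor then $f_{\cP}(x) = 0$, and the negative witness is the indicator of the $t$-side of the relevant cut, giving negative witness size $O(n^2)$ exactly as in \refthm{st}. Combining, $\wsize(\cP,\cD) = O(\sqrt{O(1)\cdot O(n^2)}) = O(n)$.

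The main obstacle I anticipate is the soundness argument — making precise that an arbitrary available $s$--$t$ flow in the breadcrumb-augmented complete graph, rather than an actual $T$-subgraph, still forces a $T$-\emph{minor} in $G$. The point is that a nonzero flow decomposes into $s$--$t$ paths and cycles; the $\ell$-fold forced passage through a single vertex $v$ with paired center edges means the path-part of the flow contains, for each leg $j$, a $v$--(leaf$_j$) walk using only edges of $G$ with the right colors, and a walk in $G$ between two vertices certifies a connected subgraph, hence a branch set, between them. Assembling the $\ell$ branch-set walks around $v$ yields a model $MT$ of $T$ inside $G$, so $T$ is a minor. The backtracking that would spoil a subgraph claim is exactly what is tolerated when we only ask for a minor — which is why the promise is phrased as subgraph-vs-not-a-minor. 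I would also need to check that the breadcrumb pairing can be encoded as legitimate span program input vectors over $\R^d$ of the stated form and does not blow up the dimension beyond $\mathrm{poly}(n)$ (it does not: $O(n)$ center vertices, each contributing $O(1)$ paired bundles), and that the resulting construction is the one whose time-efficient and log-space implementation is deferred to \refsec{efficient}.
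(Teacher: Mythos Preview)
Your overall framework matches the paper's: color-coding by $V_T$, a span program that traverses the legs in sequence through copies of the center vertex, with four-term ``paired'' input vectors at the center tying the outgoing and returning edges of each leg, aiming for $O(1)$ positive and $O(n^2)$ negative witness sizes. The positive case is exactly as you describe.

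The gap is in your negative case. Your plan---show that any positive witness decomposes into $s$--$t$ paths forced through a single center $v$, and then take the negative witness to be a cut indicator---does not go through. A positive witness, viewed as a flow on the auxiliary graph $H$ (treating each four-term vector as two equal-weight edges), is indeed an $s$--$t$ flow, but its path decomposition need not respect the pairing: an individual path may enter the center layer along $(\ket{u,j-1},\ket{u',0})$ and exit along $(\ket{u'',1},\ket{u''',j})$ with $u \neq u'''$. The pairing constraint holds only in aggregate, not pathwise, so nothing forces ``passage through a single $v$'' at the level of a path. Correspondingly, $\ket s$ may be connected to $\ket t$ in $H$ even when the span program must reject, so there is no cut in $H$ whose indicator could serve as $\ket{w'}$; and a vector orthogonal to all available two-term edge vectors is not automatically orthogonal to the four-term ones.

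The paper handles this with a second graph $H'$ built from $H$: for each leg $j$ and each center vertex $u$, add a shortcut edge $(\ket{u,j-1},\ket{u,j})$ whenever $\ket{u,j-1}$ can reach the next center layer through leg-$j$ vertices only, and then delete all leg-$j$ vertices reachable from both adjacent center layers. The negative witness is the indicator of the component of $\ket s$ in $H'$. One then checks (i) orthogonality to every available four-term vector, because the add/delete rule is designed so that the four endpoints always receive matching indicator values, and (ii) if $\ket s$ were connected to $\ket t$ in $H'$, the $H'$-path together with the $H$-paths certifying each used shortcut assemble into an $MT$ in $G$, with branch-set disjointness coming from the distinct leg colors and from the deletions. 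This $H'$ construction is the missing ingredient in your soundness argument.
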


It can be checked that if $T$ is a path or a subdivision of a claw then a graph $G$ contains $T$ as a minor if and only if it contains it as a subgraph.  Moreover, disjoint collections of paths and subdivided claws are the only graphs~$T$ with this property.  This implies the following corollary: 

\begin{cor} 
\label{cor:claw}
Assume $T$ is a path or a subdivision of a claw.  Then there exists a quantum algorithm that, given query access to the adjacency matrix of a simple graph $G$ with $n$ vertices, detects whether it contains~$T$ as a subgraph in $O(n)$ queries, except with error probability at most~$1/3$.  
\end{cor}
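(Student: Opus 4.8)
The plan is to obtain the corollary as an immediate consequence of \refthm{star}, once we observe that the subgraph/not-a-minor promise is automatically satisfied for these~$T$. First I would check that a path and a subdivision of a claw are both subdivisions of stars: a path with at least two edges is a subdivision of $K_{1,2}$ (a single edge is $K_{1,1}$), and the claw is the star $K_{1,3}$, so any subdivision of it is a subdivided star. Hence \refthm{star} applies and provides a quantum algorithm that makes $O(n)$ queries, accepts if $G$ contains $T$ as a subgraph, and rejects if $G$ does not contain $T$ as a minor.

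The substantive step is the claim stated just before the corollary: if $T$ is a path or a subdivision of a claw, then $G$ contains $T$ as a minor if and only if $G$ contains $T$ as a subgraph. The ``if'' direction is immediate from the definitions. For ``only if'' I would invoke the classical fact that minor-containment and topological-minor-containment coincide for graphs of maximum degree at most~$3$: since paths and subdivided claws are subcubic, $G$ containing $T$ as a minor implies that $G$ contains some subdivision $T'$ of $T$ as a subgraph. It then remains to note that $T'$ itself contains $T$ as a subgraph --- $T'$ is a path (resp.\ a subdivided claw) whose legs are at least as long as those of $T$, so deleting the surplus edges from the far end of each leg and then deleting the newly isolated vertices recovers $T$ --- and to chain the two subgraph inclusions.

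Given the claim, the corollary follows: for $T$ a path or a subdivision of a claw there is no input graph that contains $T$ as a minor but not as a subgraph, so the two alternatives in \refthm{star} together cover all inputs, and ``$G$ does not contain $T$ as a minor'' is the same event as ``$G$ does not contain $T$ as a subgraph.'' Thus the algorithm of \refthm{star} already decides whether $G$ contains $T$ as a subgraph, using $O(n)$ queries with error at most~$1/3$.

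The main obstacle is the proof of the subcubic minor / topological-minor equivalence, or rather its one nontrivial half: one must show that a connected branch set joined to at most three other branch sets can be replaced by a tree on those attachment points with at most three leaves --- necessarily a path or a subdivided claw --- so that a contracted $T$-model inside $G$ can be unfolded into an actual subdivision of $T$ inside~$G$. This is standard but deserves to be written out carefully; the remainder is bookkeeping. (The converse half of the sentence preceding the corollary, that disjoint unions of paths and subdivided claws are the only such~$T$, is not needed here.)
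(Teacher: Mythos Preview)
Your proposal is correct and follows exactly the approach the paper intends: the paper derives the corollary from \refthm{star} via the sentence ``It can be checked that if $T$ is a path or a subdivision of a claw then a graph $G$ contains $T$ as a minor if and only if it contains it as a subgraph,'' and you supply precisely that check using the standard subcubic minor/topological-minor equivalence. Your write-up is in fact more detailed than the paper's, which leaves the verification to the reader.
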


In \refsec{efficient}, we prove that the algorithms from \refthm{star} and \refcor{claw} can be implemented efficiently, in~$\tilde O(n)$ time and $O(\log n)$ space.  

\pfstart[Proof of \refthm{star}]
The proof uses the color-coding technique from~\cite{AlonYusterZwick95colorcoding}.  
Let $T$ be a star with~$d$ legs, of lengths $\ell_1, \ldots, \ell_d > 0$.  Denote the root vertex by~$r$ and the vertex at depth~$i$ along the $j$th leg by~$v_{j,i}$.  The vertex set of~$T$ is $V_T = \{ r, v_{1,1}, \ldots, v_{1, \ell_1}, \ldots, v_{d, 1}, \ldots, v_{d, \ell_d} \}$.  Color every vertex~$u$ of $G$ with an element $c(u) \in V_T$ chosen independently and uniformly at random.  For $v \in V_T$, let $c^{-1}(v)$ be its preimage set of vertices of~$G$.  We design a span program that 
\begin{itemize}
\item Accepts if there is a correctly colored $T$-subgraph in $G$, i.e., an injection $\iota$ from $V_T$ to the vertices of~$G$ such that $c \circ \iota$ is the identity, and $(t, t')$ being an edge of~$T$ implies that $(\iota(t), \iota(t'))$ is an edge of~$G$; 
\item Rejects if $G$ does not contain $T$ as a minor, no matter the coloring~$c$.
\end{itemize}
If $G$ contains a $T$-subgraph, then the probability it is colored correctly is at least $\abs{V_T}^{-\abs{V_T}} = \Omega(1)$.  Evaluating the span program for a constant number of independent colorings therefore suffices to detect~$T$ with probability at least~$2/3$.

\paragraph{Span program.}

The span program we define works on the vector space with orthonormal basis 
\begin{equation} \label{eqn:starspanprogrambasis}
\{ \ket s, \ket t \} \cup \Big\{ \ket{u, b} : (u, b) \in \big( c^{-1}(r) \times \{0, \ldots, d\} \big) \cup \bigcup_{v \in V_T \smallsetminus \{ r \}} c^{-1}(v) \times \{0,1\} \Big\} \enspace .
\end{equation}
The target vector is $\ket t - \ket s$.  For $u \in c^{-1}(r)$, there are free input vectors $\ket{u, 0} - \ket s$ and $\ket t - \ket{u, d}$.  For $j \in [d]$ and $u \in c^{-1}(v_{j, \ell_j})$, there are free input vectors $\ket{u, 1} - \ket{u, 0}$.  
For $j \in [d]$, there are the following input vectors: 
\begin{itemize}
\item 
For $i \in [\ell_j - 1]$, $u \in c^{-1}(v_{j,i})$ and $u' \in c^{-1}(v_{j,i+1})$, the input vectors $\ket{u', 0} - \ket{u, 0}$ and $\ket{u, 1} - \ket{u', 1}$ are available when there is an edge $(u, u')$ in~$G$.  
\item 
For $u \in c^{-1}(r)$ and $u' \in c^{-1}(v_{j,1})$, the input vector $(\ket{u', 0} - \ket{u, j-1}) + (\ket{u, j} - \ket{u', 1})$ is available when there is an edge $(u, u')$ in~$G$.  
\end{itemize}

For visualizing and arguing about this span program, it is convenient to define a graph~$H$ whose vertices are the basis vectors in Eq.~\refeqn{starspanprogrambasis}.  Edges of~$H$ correspond to the available span program input vectors; for an input vector with two terms, $\ket \alpha - \ket \beta$, add an edge $(\ket \alpha, \ket \beta)$, and for the four-term input vectors $(\ket{u', 0} - \ket{u, j-1}) + (\ket{u, j} - \ket{u', 1})$ add two ``paired" edges, $(\ket{u', 0}, \ket{u, j-1})$ and $(\ket{u, j}, \ket{u', 1})$.

\paragraph{Positive case.}

Assume that there is a correctly colored $T$-subgraph in $G$, given by a map $\iota$ from $V_T$ to the vertices of~$G$.  Then the target $\ket t - \ket s$ is achieved as the sum of the input vectors spanned by $\ket{s}$, $\ket{t}$ and the basis vectors of the form $|u,\cdot\rangle$ with $u\in \iota(V_T)$.  All these vectors are available.  This sum has a term $\ket \beta - \ket \alpha$ for each pair of consecutive vertices $\ket \alpha, \ket \beta$ in the following path from $\ket{s}$ to $\ket{t}$ in $H$:
\begin{multline*}
\ket{s}, \ket{\iota(r), 0}, \ket{\iota(v_{1,1}), 0}, \ket{\iota(v_{1,2}), 0}, \ldots, \ket{\iota(v_{1,\ell_1}), 0}, \ket{\iota(v_{1,\ell_1}), 1}, \ket{\iota(v_{1,\ell_1-1}), 1}, \ldots, \ket{\iota(v_{1,1}), 1}, \ket{\iota(r), 1}, \\ \ket{\iota(v_{2,1}), 0}, \ldots, \ket{\iota(v_{2,1}), 1}, \ket{\iota(r), 2}, \ket{\iota(v_{3,1}), 0}, \ldots \ldots, \ket{\iota(v_{d,1}), 1}, \ket{\iota(r), d}, \ket{t} \enspace .
\end{multline*}
 Pulled back to~$T$, the path goes from~$r$ out and back along each leg, in order.  The positive witness size is $O(1)$, since there are $O(1)$ input vectors along the path.  

\smallskip

This argument shows much of the intuition for the span program.  $T$ is detected as a path from~$\ket s$ to~$\ket t$, starting at a vertex in~$G$ with color~$r$ and traversing each leg of~$T$ in both directions, out and back.  It is not enough just to traverse~$T$ in this manner, though, because the path might each time use different vertices of color~$r$.  The purpose of the four-term input vectors $(\ket{u', 0} - \ket{u, j-1}) + (\ket{u, j} - \ket{u', 1})$ is to enforce that if the path goes out along an edge $(\ket{u, j-1}, \ket{u', 0})$, then it must return using the paired edge $(\ket{u', 1}, \ket{u, j})$.

\paragraph{Negative case.}

Assume that $G$ does not contain $T$ as a minor.  It may still be that $\ket s$ is connected to~$\ket t$ in~$H$.  We construct an ancillary graph $H'$ from $H$ by removing some vertices and adding some extra edges, so that $\ket s$ is disconnected from~$\ket t$ in~$H'$.  \reffig{examplesubdividedstar} shows an example.  

\begin{figure}
\vspace{-.4cm}
\centering
\begin{tabular}{c c}
\subfigure[\label{fig:exampleT}$T$]{$\qquad\qquad$\raisebox{.65cm}{\includegraphics[scale=1]{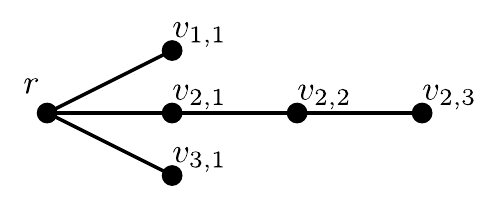}}} &
\subfigure[\label{fig:exampleG}$G$]{$\qquad\qquad$\includegraphics[scale=1]{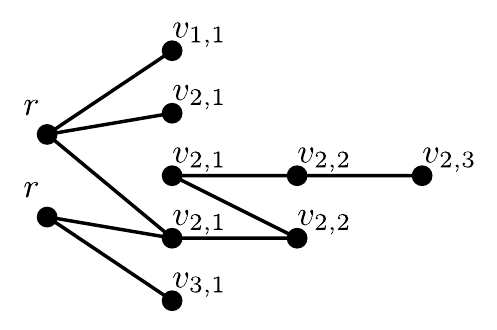}} \\
\subfigure[\label{fig:exampleH}$H$]{\includegraphics[scale=1]{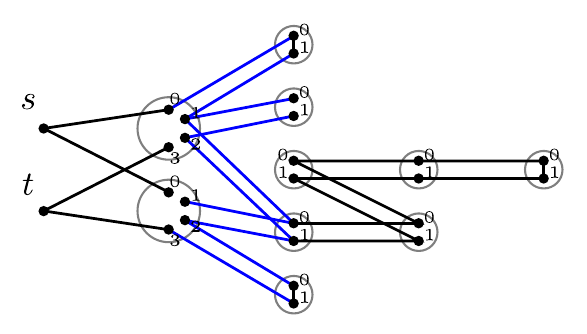}} &
\subfigure[\label{fig:exampleHprime}$H'$]{\includegraphics[scale=1]{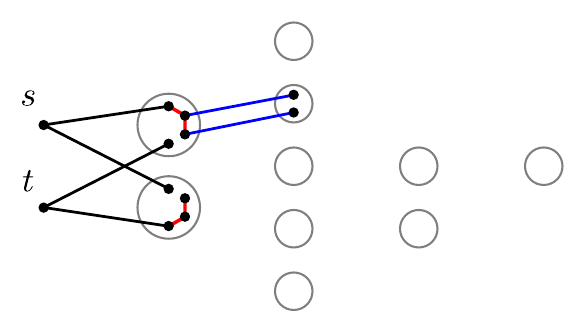}}
\end{tabular}
\caption{An example to illustrate the constructions of the graphs~$H$ and~$H'$ in the negative case of the proof of \refthm{star}.  (a)~A subdivided star $T$ with $(\ell_1, \ell_2, \ell_3) = (1, 3, 1)$.  (b)~A graph $G$ with vertices labeled by their colors, i.e., vertices of~$T$.  Although $G$ contains $T$ as a subgraph, the coloring is incorrect and the span program will reject.  (c)~The graph $H$, edges and paired edges of which correspond to available span program input vectors.  Vertices of $G$ have been split into two or four parts, and vertices~$s$ and~$t$ have been added.  Paired edges are colored blue.  Note that $s$ is connected to~$t$ in~$H$.  (d)~The graph~$H'$.  New edges are colored red.  Note that $s$ is disconnected from~$t$.  
} \label{fig:examplesubdividedstar}
\end{figure}

The graph $H'$ is defined starting with~$H$.  
Let $V_j = \{ v_{j,1}, \ldots, v_{j, \ell_j} \}$, $H_j = \{\ket{u, b} : c(u) \in V_j,\, b \in \{0,1\}\}$ and $R_j = \{ \ket{u,j} : c(u) = r \}$.  For $j \in [d]$ and $u \in c^{-1}(r)$, 
\begin{itemize}
\item Add an edge $(\ket{u, j-1}, \ket{u, j})$ to $H'$ if $\ket{u, j-1}$ is connected to $R_j$ in~$H$ via a path for which all internal vertices, i.e., vertices besides the two endpoints, are in $H_j$; and 
\item Remove all vertices in $H_j$ that are connected both to $R_{j-1}$ and $R_j$ in~$H$ via paths with all internal vertices in~$H_j$.  
\end{itemize}

Note that in the second case, for each $u' \in c^{-1}(V_j)$, either both $\ket{u', 0}$ and $\ket{u', 1}$ are removed, or neither~is.  Indeed, if there is a path from $\ket{u', 0}$ to $R_j$, then it necessarily must pass along an edge $(\ket{u'', 0}, \ket{u'', 1})$ with $c(u'') = v_{j, \ell_j}$.  Then backtracking along the path before this edge, except with the second coordinate switched $0 \leftrightarrow 1$, gives a path from $\ket{u', 0}$ to $\ket{u', 1}$.  Similarly $\ket{u', 0}$ is connected to $\ket{u', 1}$ if there is a path from $\ket{u', 1}$ to $R_{j-1}$.  

Define the negative witness $\ket{w'}$ by $\braket{v}{w'} = 1$ if $\ket s$ is connected to $\ket v$ in~$H'$, and $\braket{v}{w'} = 0$ otherwise.  Then $\ket{w'}$ is orthogonal to all available input vectors.  In particular, it is orthogonal to any available four-term input vector $(\ket{u', 0} - \ket{u, j-1}) + (\ket{u, j} - \ket{u', 1})$, corresponding to two paired edges in~$H$, because either the same edges are present in~$H'$, or $\ket{u', 0}$ and $\ket{u', 1}$ are removed and a new edge $(\ket{u, j-1}, \ket{u, j})$ is added.  

To verify that $\ket{w'}$ is a witness for the span program evaluating to~$0$, with $(\bra s - \bra t) \ket{w'} = 1$, it remains to prove that $\ket s$ is disconnected from $\ket t$ in~$H'$.  Assume that $\ket s$ is connected to~$\ket t$ in~$H'$, via a simple path~$p$.  Based on the path~$p$, we will construct a minor of~$T$ in~$G$, giving a contradiction.  

The path~$p$ begins at~$\ket s$ and next must move to some vertex $\ket{u_0, 0}$, where $c(u_0) = r$.  The path ends by going from a vertex $\ket{u_d, d}$, where $c(u_d) = r$, to $\ket t$.  By the structure of the graph~$H'$, $p$ must also pass in order through some vertices $\ket{u_1, 1}, \ket{u_2, 2}, \ldots, \ket{u_{d-1}, d-1}$, where $c(u_j) = r$.  

Consider the segment of the path from $\ket{u_{j-1}, j-1}$ to $\ket{u_j, j}$.  Due to the construction, this segment must cross a new edge added to~$H'$, $(\ket{u_j', j-1}, \ket{u_j', j})$ for some $u_j'$ with $c(u_j') = r$.  Thus the path~$p$ has the form 
\begin{equation*}
\ket s, \ldots, \ket{u_1', 0}, \ket{u_1', 1}, \ldots, \ket{u_2', 0}, \ket{u_2', 1}, \ldots \ldots, \ket{u_d', 0}, \ket{u_d', 1}, \ldots, \ket t \enspace .  
\end{equation*}
Based on this path, we can construct a minor for~$T$.  The branch set of the root~$r$ consists of all the vertices in~$G$ that correspond to vertices along~$p$ (by discarding the second coordinate).   
Furthermore, for each edge $(\ket{u_j', j-1}, \ket{u_j', j})$, there is a path in~$H$ from $\ket{u_j', j-1}$ to $R_j$, in which every internal vertex is in $H_j$.  The first $\ell_j$ vertices along the path give a minor for the $j$th leg of~$T$.  It is vertex-disjoint from the minors for the other legs because the colors are different.  It is also vertex-disjoint from the branch set of~$r$ because no vertices along the path are present in~$H'$.  Therefore, we obtain a minor for~$T$, a contradiction.  

Since each coefficient of $\ket{w'}$ is zero or one, the overlap of $\ket{w'}$ with any input vector is at most two in magnitude.  Since there are $O(n^2)$ input vectors, the witness size is $O(n^2)$.  

\smallskip

By Eq.~\refeqn{wsize}, the span program's overall witness size is the geometric mean of the worst witness sizes in the positive and negative cases, or $O(n)$.
\pfend

\begin{figure}
\vspace{-.4cm}
\centering
\begin{tabular}{c@{$\;$}c@{$\quad\!\!$}c}
\subfigure[$T$]{\raisebox{.95cm}{$\!\!\!\!\!\!\!\!$\includegraphics[scale=.9]{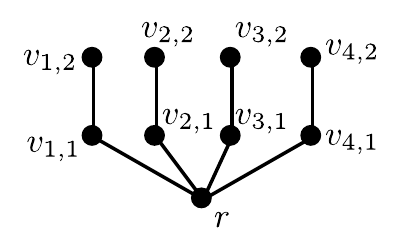}}}
&
\subfigure[$G$]{\raisebox{.95cm}{\includegraphics[scale=.9]{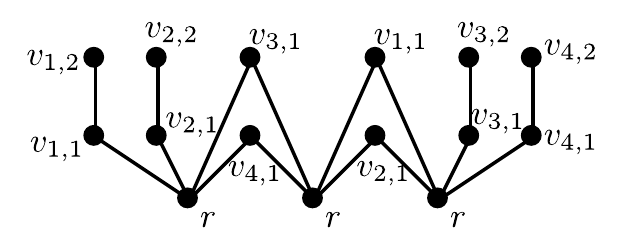}}}
&
\subfigure[$H$]{\includegraphics[scale=1.3]{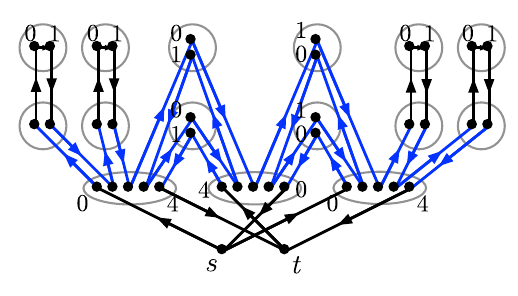}}
\end{tabular}
\caption{Let $T$ be the subdivided star with four legs of lengths $\ell_1 = \cdots = \ell_4 = 2$.  Then the span program from the proof of \refthm{star} accepts the colored graph~$G$ in~(b), even though $G$ contains~$T$ only as a minor and not as a subgraph.  The corresponding graph~$H$ is shown in~(c), together with a flow that indicates the combination of input vectors adding to $\ket t - \ket s$.  Notice that the flow is balanced at all vertices except~$s$ and~$t$, and also that the flows along paired edges are of equal strengths in opposite directions.  
} \label{fig:exampleacceptsminor}
\end{figure}

The promise that $G$ does not contain $T$ as a minor is necessary for the correctness of the algorithm; see~\reffig{exampleacceptsminor}.  

\begin{thm} \label{thm:starforest}
Let $T$ be a collection of vertex-disjoint subdivided stars.  Then there exists a quantum algorithm that, given query access to the adjacency matrix of a simple graph $G$ with $n$ vertices, makes $O(n)$ queries, and, with probability at least~$2/3$, accepts if $G$ contains~$T$ as a subgraph and rejects if~$G$ does not contain $T$ as a minor.  
\end{thm}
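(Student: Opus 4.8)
The plan is to \emph{chain together} $k$ copies of the span program from the proof of \refthm{star}, one per connected component of $T$. Write $T = T^{(1)} \sqcup \cdots \sqcup T^{(k)}$, where each $T^{(i)}$ is a subdivided star with root $r^{(i)}$ and $d_i$ legs; since $T$ is fixed, $k = O(1)$. As in the proof of \refthm{star}, color each vertex of $G$ independently and uniformly at random with an element of $V_T = \bigsqcup_{i} V_{T^{(i)}}$. A correctly colored $T$-subgraph, if one exists, survives with probability at least $\abs{V_T}^{-\abs{V_T}} = \Omega(1)$, so it suffices to construct a span program that accepts whenever $G$ has a correctly colored $T$-subgraph and rejects whenever $G$ has no $T$-minor (for every coloring), and then to repeat it $O(1)$ times.

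To build this span program, introduce new basis vectors $\ket s = \ket{s_0}, \ket{s_1}, \ldots, \ket{s_k} = \ket t$, and for each $i \in [k]$ include the span program of \refthm{star} for the single subdivided star $T^{(i)}$, but with $\ket{s_{i-1}}$ playing the role of its source $\ket s$ and $\ket{s_i}$ the role of its sink $\ket t$. The target vector is $\ket t - \ket s$. Because the color sets $V_{T^{(i)}}$ are pairwise disjoint, every basis vector of the form $\ket{u,\cdot}$ appearing in block $i$ is distinct from those of every other block; hence the associated graph $H$ is a chain of the single-star graphs $H^{(i)}$ glued only at the cut vertices $\ket{s_1}, \ldots, \ket{s_{k-1}}$, and the same is true of the ancillary graph $H'$ obtained by applying the construction of \refthm{star} within each block separately.

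\emph{Positive case.} If $G$ contains a correctly colored copy of $T$, restrict the color-preserving injection to each $T^{(i)}$ and run the positive-case argument of \refthm{star} in block $i$; this yields a combination of available input vectors of block $i$ summing to $\ket{s_i} - \ket{s_{i-1}}$, and adding over $i$ gives $\ket t - \ket s$, with total witness size $O(1)$. \emph{Negative case.} Suppose $G$ has no $T$-minor, and define $\ket{w'}$ by $\braket{v}{w'} = 1$ if $\ket s$ is connected to $\ket v$ in $H'$ and $0$ otherwise. Since every available input vector lies entirely within one block $H^{(i)}$, orthogonality of $\ket{w'}$ to all of them follows block-by-block from \refthm{star}. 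If $\ket s$ were connected to $\ket t$ in $H'$ via a simple path $p$, then because $\ket{s_1},\ldots,\ket{s_{k-1}}$ are cut vertices separating the blocks, $p$ visits each exactly once and in order, and its segment inside block $i$ is a simple $\ket{s_{i-1}}$--$\ket{s_i}$ path there. Applying the negative-case argument of \refthm{star} to each such segment produces a minor of $T^{(i)}$ in $G$ all of whose branch sets consist of $G$-vertices with colors in $V_{T^{(i)}}$; as the color sets are disjoint, these $k$ minors are vertex-disjoint in $G$, and their union is a $T$-minor, a contradiction. Since $\ket{w'}$ has $0/1$ coefficients and there are $O(n^2)$ input vectors, its witness size is $O(n^2)$. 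By Eq.~\refeqn{wsize} the overall witness size is $O(n)$, and \refthm{span} finishes the proof.

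The conceptual subtlety — and the only place the argument differs from \refthm{star} — is that $G$ might contain each $T^{(i)}$ as a minor individually without containing the disjoint union $T$ as a minor; the chaining is sound precisely because a successful $\ket s$--$\ket t$ path forces all $k$ component-minors to appear simultaneously on the \emph{disjoint} color classes $V_{T^{(i)}}$, hence vertex-disjointly. The main obstacle is thus not any new estimate but the bookkeeping that $H$ and $H'$ genuinely decompose as chains glued at the $\ket{s_i}$, i.e., that no input vector straddles two blocks; once that is checked, both the flow of the positive case and the disconnection of the negative case reduce mechanically to \refthm{star}.
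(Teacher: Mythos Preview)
Your proof is correct and rests on the same key idea as the paper's: color $G$ by $V_T = \bigsqcup_i V_{T^{(i)}}$ once, run the single-star construction of \refthm{star} on each color class, and use the disjointness of the color classes to guarantee that the resulting $T^{(i)}$-minors are vertex-disjoint and hence assemble into a $T$-minor. The only difference is packaging: the paper simply runs the $k$ span programs separately (one per component, on the subgraph of $G$ induced by $c^{-1}(V_{T^{(i)}})$) and accepts iff all of them accept, whereas you splice the $k$ programs into a single chained $st$-connectivity span program through the cut vertices $\ket{s_0},\ldots,\ket{s_k}$. Both realize the same ``AND of the $k$ blocks,'' and your explicit chain makes the witness-size bookkeeping in the positive and negative cases transparent; neither approach buys anything the other does not.
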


\pfstart
It is not enough to apply \refthm{star} once for each component of~$T$, because some components might be subgraphs of other components.  Instead, proceed as in the proof of \refthm{star}, but for each fixed coloring of~$G$ by the vertices of~$T$ run the span program once for every component on the graph $G$ restricted to vertices colored by that component.  This ensures that in the negative case, if the span programs for all components accept, then there are vertex-disjoint minors for every component, which together form a minor for~$T$.  
\pfend

Paired edges are more complicated to work with than ordinary edges.  For implementing the quantum algorithm time efficiently, in \refthm{efficient} below, we will therefore work with a slightly different span program in which the vertices $\ket{u, b}$, for $(u, b) \in c^{-1}(r) \times [d-1]$, are split in four and the vertices $\ket{u, b}$, for $(u, b) \in c^{-1}(r) \times \{0, d\}$ are split in two.  The modified span program computes the same function on allowed input graphs~$G$, with nearly the same witness size, but has the advantage that any vertex is incident to at most one paired edge.

\subsection{Triangle} \label{sec:triangle}

The technique used in the proof of \refthm{star} extends also to other problems.  As an example, we consider the case that~$T$ is a triangle.  Although the best known algorithm for detecting triangle subgraphs uses $O(n^{1.\overline{296}})$ queries~\cite{Belovs11triangle}, triangles can be detected in sparse graphs in $O(n^{1.1\overline{6}})$ queries~\cite[Theorem~4.4]{ChildsKothari10graphproperties}.  

\begin{thm} \label{thm:triangle}
There exists a $O(n)$-query quantum algorithm that, given query access to the adjacency matrix of a simple graph $G$ with $n$ vertices, accepts if $G$ contains a triangle and rejects if $G$ is a forest, i.e., does not contain a triangle as a minor, except with error probability at most~$1/3$.  
\end{thm}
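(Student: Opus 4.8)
The plan is to adapt the color-coding plus breadcrumb strategy from the proof of \refthm{star}. First I would color every vertex of $G$ independently and uniformly at random with one of three colors $\{1,2,3\}$, one per vertex of the triangle $T = K_3$. If $G$ contains a triangle as a subgraph, its three vertices receive the three distinct colors with probability $3!/3^3 = 2/9 = \Omega(1)$, so it suffices to exhibit, for each fixed coloring, a span program $\cP$ of positive witness size $O(1)$ and negative witness size $O(n^2)$ that accepts whenever $G$ contains a correctly colored triangle and rejects whenever $G$ is a forest; running this for a constant number of independent colorings and accepting if any run accepts then gives error at most $1/3$. Recall that $G$ has a $K_3$-minor if and only if it contains a cycle, so ``$G$ is a forest'' and ``$G$ does not contain $T$ as a minor'' coincide. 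By Eq.~\refeqn{wsize} and \refthm{span}, such a $\cP$ yields the desired $O(n)$-query algorithm.

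A natural choice for $\cP$ is the following. Work on the space with orthonormal basis $\{\ket s,\ket t\}$ together with a vector $\ket u$ for each vertex $u$ of color $2$ or $3$ and two copies $\ket{u,0},\ket{u,1}$ for each vertex $u$ of color $1$, with target $\ket t-\ket s$. For an edge of $G$ between a color-$2$ vertex $v$ and a color-$3$ vertex $w$ there is a two-term input vector $\ket w - \ket v$; for an edge between a color-$1$ vertex $u$ and a color-$2$ vertex $v$ there is $\ket v - \ket{u,0}$; and for an edge between a color-$3$ vertex $w$ and a color-$1$ vertex $u$ there is $\ket{u,1}-\ket w$. The breadcrumb is a paired free input vector $(\ket{u,0}-\ket s)+(\ket t-\ket{u,1})$ for each color-$1$ vertex $u$: in any witness it forces the net flow from $\ket s$ into $\ket{u,0}$ to equal the net flow from $\ket{u,1}$ into $\ket t$, i.e.\ the ``path'' must leave and re-enter the \emph{same} color-$1$ vertex. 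In the positive case, a correctly colored triangle $a,b,c$ of colors $1,2,3$ gives, in the graph $H$ associated to $\cP$ as in the proof of \refthm{star}, the length-four path $\ket s,\ket{a,0},\ket b,\ket c,\ket{a,1},\ket t$; the three two-term vectors along it together with the paired free vector for $a$ sum to $\ket t-\ket s$, so the positive witness size is $O(1)$.

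For the negative case I would follow the template of the negative case of \refthm{star}. Starting from $H$, construct an auxiliary graph $H'$ by the analogous surgery — deleting certain color-$2$ and color-$3$ vertices and inserting the shortcut edges forced by the breadcrumb vectors, chosen so that the reachability relation in $H'$ is orthogonal to every paired free vector — and set $\braket{v}{w'}=1$ if $\ket v$ is reachable from $\ket s$ in $H'$ and $0$ otherwise. As in \refthm{star}, $\ket{w'}$ is then orthogonal to all available input vectors, has $\{0,1\}$ entries so its overlap with each of the $O(n^2)$ two-term false input vectors is at most $1$, and satisfies $(\bra s-\bra t)\ket{w'}=1$; hence it is a negative witness of size $O(n^2)$, \emph{provided} $\ket s$ is disconnected from $\ket t$ in $H'$. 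To verify the disconnection, suppose there is a simple $\ket s$--$\ket t$ path $p$ in $H'$. Pulled back to $G$ (discarding the $0/1$ second coordinate of the color-$1$ copies), $p$ becomes a closed walk that alternates among the three color classes and returns to the color-$1$ vertex it started from; since consecutive vertices of such a walk lie in different color classes they are pairwise distinct, so the walk is an honest cycle in $G$, contradicting that $G$ is a forest. By Eq.~\refeqn{wsize}, the overall witness size is then $O(\sqrt{1\cdot n^2})=O(n)$.

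The main obstacle is the negative case: specifying the surgery defining $H'$ exactly — which color-$2$/color-$3$ vertices to delete and which shortcuts to add so that $\ket{w'}$ is genuinely orthogonal to every breadcrumb vector and takes only values in $\{0,1\}$ — and then proving rigorously that an $\ket s$--$\ket t$ path in $H'$ forces a true cycle in $G$ rather than a closed walk that merely backtracks. The forest hypothesis must enter essentially: as noted in the introduction, without it the construction accepts, for example, a five-cycle colored $1,2,3,2,3$, so the argument has to use acyclicity of $G$ to rule out the degenerate cases, and carrying this bookkeeping through carefully — in particular when the pulled-back walk passes through several color-$1$ vertices — is the delicate point.
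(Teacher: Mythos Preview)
Your span program and the positive case are essentially the paper's. The gap is in the negative case: the $\{0,1\}$-valued, $H'$-reachability witness that worked for stars \emph{cannot} exist here, so no choice of ``surgery'' will rescue the template.

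To see why, take $G$ to be the path $a\!-\!b\!-\!c\!-\!d$ with colors $1,2,3,1$. This is a forest, so the span program must reject. Orthogonality to the paired free vector $(\ket{u,0}-\ket s)+(\ket t-\ket{u,1})$ together with $\braket{s}{w'}=1$, $\braket{t}{w'}=0$ forces $\braket{u,0}{w'}-\braket{u,1}{w'}=1$ for \emph{every} color-$1$ vertex $u$. With $\{0,1\}$ values this means $\braket{a,0}{w'}=\braket{d,0}{w'}=1$ and $\braket{a,1}{w'}=\braket{d,1}{w'}=0$. But the available two-term vectors along the edges $a\!-\!b$, $b\!-\!c$, $c\!-\!d$ force $\braket{d,1}{w'}=\braket{c}{w'}=\braket{b}{w'}=\braket{a,0}{w'}=1$, a contradiction. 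So a $\{0,1\}$ witness does not exist for this input; the template from \refthm{star} genuinely breaks. (This is exactly the ``colored $1,2,3,1$'' obstruction from the introduction, but now seen at the level of the witness rather than the unpaired construction.)

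The paper's remedy is to abandon $\{0,1\}$ values. It roots every tree of $G$, removes monochromatic edges to get $H$, splits the color-$1$ vertices as you do, and then sets $\braket{u}{w'}$ equal to an integer \emph{level} $\ell(u)$: along the path from the root, increment by $+1$ each time you traverse a $(u,1)\!\to\!(u,0)$ edge and by $-1$ for the reverse. This automatically satisfies the paired-vector constraints and the edge constraints, but now $|\ell(u)|$ can be as large as the depth $d(u)$ in $H$, giving worst-case negative witness size $\Theta(n^4)$. The crucial extra idea is probabilistic: because $G$ is a forest, each edge of $G$ is removed (becomes monochromatic) independently with probability $1/3$ under the random coloring, so $d(u)$ has geometric tails and $\Ex[(d(u)+1)^2]=O(1)$. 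Hence the \emph{expected} negative witness size is $O(n^2)$, and a Markov bound trades a small additive error for the desired $O(n)$ witness size. Your outline is missing this integer-level witness and the accompanying expectation argument; the acyclicity hypothesis enters not through a combinatorial disconnection lemma but through the independence of edge removals.
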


\pfstart
The algorithm is similar to the one in \refthm{star}.  Let $c$ be a uniformly random map from the vertex set~$V_G$ of~$G$ to $\{ 0, 1, 2 \}$.  Define a span program on a vector space with orthonormal basis 
\begin{equation} \label{eqn:trianglespanprogrambasis}
\{ \ket s, \ket t \} \cup \{ \ket{u, c(u)} : u \in V_G \} \cup \{ \ket{u, 3} : u \in c^{-1}(0) \} \enspace .
\end{equation}
The target vector is $\ket t - \ket s$.  The free input vectors are $\ket t - \ket s + \ket{u, 0} - \ket{u, 3}$ for $u \in c^{-1}(0)$.  For $j \in \{0,1,2\}$ and $(u, u') \in c^{-1}(j) \times c^{-1}(j+1 \bmod 3)$, add an input vector $\ket{u',j+1} - \ket{u,j}$ that is available if the edge $(u, u')$ is present in~$G$.  

If $G$ contains a triangle, then the triangle is colored correctly with probability $2/9$.  Say the triangle is $\{u_0, u_1, u_2\}$, with $c(u_j) = j$.  Since the sum of the input vectors $\ket t - \ket s + \ket{u_0, 0} - \ket{u_0, 3}$, $\ket{u_1, 1} - \ket{u_0, 0}$, $\ket{u_2, 2} - \ket{u_1, 1}$ and $\ket{u_0, 3} - \ket{u_2, 2}$ equals $\ket t - \ket s$, the span program accepts.  The witness size is~$3$.  

The intuition for this construction is similar to \refthm{star}.  By using a four-term input vector $\ket t - \ket s + \ket{u, 0} - \ket{u, 3}$ for $u \in c^{-1}(0)$, instead of two separate input vectors $\ket{u, 0} - \ket s$ and $\ket t - \ket{u, 3}$, we prevent the span program from accepting paths $u_0, u_1, u_2, u_0'$ with $c(u_0') = 0$ but $u_0' \neq u_0$.  

Let us make this intuition precise.  Assume that $G$ is acyclic.  We argue that the span program rejects by constructing a negative witness~$\ket{w'}$.  Unlike in \refthm{star}, the coefficients of $\ket{w'}$ will not be only~$0$ or~$1$, and the worst-case witness size is $\Theta(n^4)$.  We will prove that the expected witness size is $O(n^2)$.  

Fix arbitrarily a root for every tree component of~$G$, and measure depths from these root vertices.  Let~$H$ be the same graph as~$G$, except with edges connecting vertices of the same color removed.  For every tree component in~$H$, set the root to be the (unique) vertex in that component with least depth in~$G$.  For a vertex~$u$, let $d(u)$ be its depth in~$H$.  
Observe that because $G$ is acyclic, going from $G$ to~$H$ every edge is removed independently with probability $1/3$.  Let $H'$ be the same as $H$ but with each vertex $u \in c^{-1}(0)$ split into two vertices $(u, 0)$ and $(u, 3)$, so that $(u, 0)$ is connected to $u$'s neighbors of color~$1$, and $(u, 3)$ is connected to $u$'s neighbors of color~$2$.  Also add an edge from $(u, 0)$ to $(u, 3)$.  $H'$ is acyclic.  

Using the graph $H'$, we can specify a negative witness $\ket{w'}$.  Let $\braket{s}{w'} = 1$ and $\braket{t}{w'} = 0$.  Since the vertices of $H'$ are in one-to-one correspondence with the other basis vectors of Eq.~\refeqn{trianglespanprogrambasis}, it remains to give coefficients for each vertex of~$H'$.  Note that for any $u \in c^{-1}(0)$, the condition that $\ket{w'}$ be orthogonal to the free input vector $\ket t - \ket s + \ket{u, 0} - \ket{u, 3}$ implies that $\ket{w'}$ must satisfy $\braket{u, 0}{w'} = \braket{u, 3}{w'} + 1$.  Up to an additive factor, this condition determines the coefficients of $\ket{w'}$ for each connected component of~$H'$.  Let $r$ be the root of the component. For a vertex $u$ in the component, define the level $\ell(u)$ as the number of $((u,3),(u,0))$ edges minus the number of $((u,0),(u,3))$ edges traversed along the simple path from $r$ to~$u$.  Let $\braket{u}{w'} = \ell(u)$.  Note that $\ell(u) \leq d(u)+1$ because no two new edges are adjacent.  

Unfortunately, the coefficients of~$\ket{w'}$ may grow as large as $\Omega(n)$, resulting in a negative witness size of order $n^4$. However, the probability of this event is negligible. Indeed, the negative witness size is bounded by
\begin{equation*}
\sum_{u, v \in H'} \langle u-v, w' \rangle^2 \le \sum_{u, v \in H'} 2(\braket{u}{w'}^2 + \braket{v}{w'}^2) \le 4n \sum_{v\in H'} (d(v)+1)^2 \enspace ,
\end{equation*}
because $H'$ has at most $2n$ vertices.  For a fixed $v$, the expectation of $(d(v)+1)^2$ is bounded by the series $\frac13 \sum_{i=0}^\infty (i+1)^2(2/3)^i = O(1)$.  By linearity of expectation, the expected size of the negative witness is~$O(n^2)$.  By a Markov inequality, for any $\eps > 0$ one may choose $C$ so that the probability the negative witness size exceeds $C n^2$ is less than~$\eps$.  This case adds at most~$\eps$ to the algorithm's error probability.  If the negative witness size is at most $C n^2$ then the total witness size is~$O(n)$.  
\pfend

\subsection{A counterexample for \texorpdfstring{$K_5$}{K\_5}} \label{sec:K5}

The algorithms in Sections~\ref{sec:star} and~\ref{sec:triangle} suggest a general approach for solving the subgraph/not-a-minor problem for a graph $T$: randomly color $G$ by the vertices of~$T$, and construct a span program for a traversal of~$H$, using the paired-edge trick to assure that the same vertex of $G$ is chosen for all appearances of a vertex of $T$ in the traversal.  Natural candidate graphs to consider next include general trees and cycles.  In this section, however, we show that the approach fails for some graphs~$T$.  

\begin{figure}
\vspace{-.4cm}
\[
\def\objectstyle{\scriptstyle}
\raisebox{-1.35cm}{\includegraphics{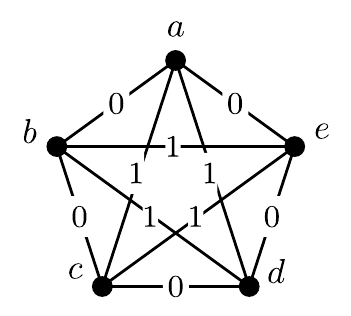}}
\quad \times \quad \Z/2\Z \quad = \qquad 
\raisebox{-1.75cm}{\includegraphics{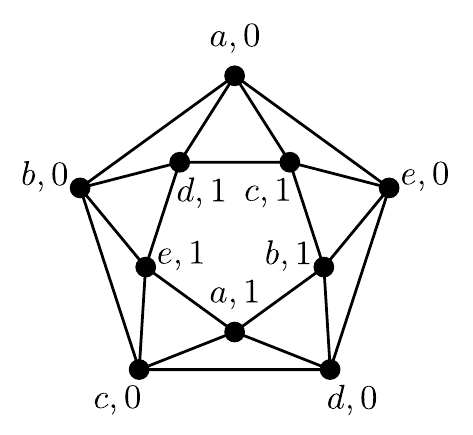}}
\]
\caption{A skew product of $K_5$ and $\Z/2\Z$ gives a planar graph that does not contain~$K_5$ as a minor.  This example is due to Jim Geelen.}
\label{fig:skew}
\end{figure}

Consider the following operation that is a special case of the skew product of a graph and a group~\cite{KumjianPask99skew}.  Let $T$ be a graph with each edge $e$ marked by $s_e \in \Z/2\Z$.  The skew product of $T$ and $\Z/2\Z$ is the graph $T_2$ with vertices $(v, i)$, where $v$ is a vertex of~$T$ and $i \in \Z/2\Z$.  $T_2$ has two edges for each edge $(u, v)$ of~$T$: $\big( (u,i), (v,i+s_{(u,v)}) \big)$ for $i \in \Z/2\Z$.  \reffig{skew} shows an example.  

The span program built along the lines of the algorithm from Theorems~\ref{thm:star} and~\ref{thm:triangle} accepts on this graph if it is colored correctly, i.e., if both vertices $(v,0)$ and $(v,1)$ of $T_2$ are colored by~$v$ in~$T$.  Indeed, the positive witness for $G = T_2$ can use all available input vectors with uniform coefficients~$1/2$.  

In general, however, and as shown in \reffig{skew}, $T_2$ does not contain $T$ as a minor.  It is easy to check that if $T$ is a tree or a triangle, $T_2$ does contain $T$ as a minor---and even as a subgraph, in the case of a tree.  

This shows that our algorithm does not work for all subgraph/not-a-minor promise problems.  Similarly, one can define a (total) minor-closed forbidden subgraph property for which our algorithm fails.  The property of having as a minor neither $K_5$ nor the eleven-vertex path~$P_{11}$ is a forbidden subgraph property.  The product graph in \reffig{skew} satisfies this property, but our algorithm will falsely detect a $K_5$ subgraph.  

Characterizing the quantum query complexities of minor-closed forbidden subgraph properties is an interesting problem.  Does any minor-closed forbidden subgraph property have $\omega(n)$ quantum query complexity?

\section{Time-efficient implementations} \label{sec:efficient}

A span program~$\cP$, on domain $\cD$, can be evaluated by a quantum algorithm that only makes $O(\wsize(\cP, \cD))$ queries to the input string (\refthm{span}).  The algorithm alternates a fixed, input-independent reflection with a simple input-dependent reflection.  This structure is inherited from Grover's search algorithm~\cite{Grover96search}.  In general, however, the algorithm will not be time efficient, because the input-independent reflection will be difficult to implement using local gates.  The time-efficient implementation of span programs can be subtle.  

In this section, we show how to use a quantum walk to implement efficiently the input-independent reflection for the algorithms of Theorems~\ref{thm:st}, \ref{thm:starforest} and~\ref{thm:triangle}.  Roughly, the quantum walk is on either the complete graph or a layered graph with complete bipartite graphs between adjacent layers.  Some modifications are needed, however, to deal with paired edges.  The desired reflection is about the stationary eigenspace of the quantum walk.  The graph's constant spectral gap allows for implementing this reflection to within inverse polynomial precision using only logarithmically many steps of the walk.  The graph's uniform structure allows for implementing each step efficiently.  We will show: 

\begin{thm} \label{thm:efficient}
The algorithm from \refthm{st} can be implemented in $\tilde O(n \sqrt d)$ quantum time, and the algorithms from Theorems~\ref{thm:starforest} and~\ref{thm:triangle} can be implemented in $\tilde O(n)$ quantum time.  In these implementations, the algorithms from Theorems~\ref{thm:st} and~\ref{thm:starforest} use $O(\log n)$ bits and qubits of space.  
\end{thm}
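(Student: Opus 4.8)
The plan is, in each case, to reduce to a single task: implementing efficiently the \emph{input-independent} reflection that the generic span-program algorithm of \refthm{span} alternates with a trivial input-dependent one. The input-dependent reflection merely places a phase $-1$ on each basis state corresponding to a false input vector, which for all three span programs is a single query to one entry of the adjacency matrix, so only the input-independent reflection is at issue. For each of our span programs that reflection is a reflection about $\ker A$, where $A$ is the fixed matrix whose columns are the (negated) target together with all input vectors; equivalently, a reflection about the space of flows that are divergence-free at every non-terminal vertex of an underlying graph. Its orthogonal complement is spanned by the per-vertex ``divergence at $v$ equals $0$'' vectors, which do not mutually commute as projectors, so the reflection cannot be assembled one vertex at a time.

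The first step is to make the underlying graph layered, so that these per-vertex constraint vectors split into $O(1)$ families, each consisting of vectors with pairwise disjoint supports. For \refthm{st} this is the ``subdivide every edge of $K_n$'' device sketched in \refsec{introduction}: divergence-free flows in the subdivision biject with those in $K_n$, the subdivision is bipartite, and the original-vertex constraints form one disjointly supported family while the midpoint constraints form another. For the span programs of Theorems~\ref{thm:starforest} and~\ref{thm:triangle} the graph $H$ is already essentially layered by color class, and subdividing the $O(1)$ ``rungs'' that spoil bipartiteness leaves $O(1)$ disjointly supported families. Inside one family the reflection about that family's orthogonal complement is a direct sum, over vertices $v$, of reflections about the uniform signed vector on the edges incident to $v$; because the ambient graph is the complete graph (resp.\ complete bipartite graphs between adjacent layers) and its incidence structure is computable on the fly, each such ``Grover diffusion'' costs only $\tilde O(1)$ time, and the whole state lives in a register holding one or two vertex labels, i.e.\ $O(\log n)$ qubits.

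The second step recovers the reflection about $\ker A$---the intersection of the $O(1)$ families' orthogonal complements, i.e.\ the intersection of their $+1$ eigenspaces---from the \emph{product} of the $O(1)$ family reflections, by the standard phase-estimation ``effective spectral gap'' technique underlying Szegedy-type walks: running phase estimation of that product to precision $\eps$ and reflecting the phase register about zero yields the desired reflection up to error $\eps$, using $O\big(\tfrac1\delta\log\tfrac1\eps\big)$ applications of the product, where $\delta$ is the product's spectral gap around $\pm1$. That gap is a constant here because each pair of adjacent layers of the (subdivided) graph is joined by a complete bipartite graph, so the relevant discriminant operator is well gapped. I expect \emph{this spectral-gap estimate, and especially its robustness in the presence of paired edges, to be the main obstacle}: the four-term input vectors give rise to paired-edge gadgets that break the clean bipartite-walk structure. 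To control them I would first pass, as announced after \refthm{star}, to the modified span program in which each $\ket{u,b}$ with $c(u)=r$ is split (into four, resp.\ two) so that every basis vector is incident to at most one paired edge; a paired edge then becomes a small local gadget that folds into one of the $O(1)$ commuting families without harming the expansion bound or the witness size, which changes by only a constant factor.

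Finally, assemble the cost. One invocation of the input-independent reflection to inverse-polynomial precision costs $O(\log n)$ steps of the walk, each step $\tilde O(1)$ time; by \refthm{span} the algorithm performs $O(\wsize(\cP,\cD))$ invocations; and $\wsize(\cP,\cD)$ is $O(n\sqrt d)$ for \refthm{st} and $O(n)$ for the others (for \refthm{starforest} after iterating over the relevant colorings, for \refthm{triangle} after the averaging argument in its proof). Hence the running times are $\tilde O(n\sqrt d)$ and $\tilde O(n)$ respectively. For space: the walk register, the register for the inner phase estimation that implements the reflection, and the register for the outer phase estimation of \refthm{span} are each $O(\log n)$ qubits, and the $st$-connectivity algorithm needs nothing else. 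For \refthm{starforest} we replace the uniformly random coloring by an explicit perfect-hash family of colorings (of size $O(\log n)$, each evaluable in $O(\log n)$ space), which still colors some $T$-subgraph correctly when one exists and multiplies the query and time bounds by only $O(\log n)$, so that algorithm is also $O(\log n)$-space. The triangle algorithm's negative-case analysis, by contrast, genuinely relies on a random coloring, stored in $\Theta(n)$ classical bits, which is why it is omitted from the logarithmic-space claim.
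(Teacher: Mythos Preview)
Your proposal is on the right track and matches the paper's overall architecture: reduce everything to the input-independent reflection, realize that reflection via a bipartite Szegedy-style walk, bound the spectral gap, and derandomize the coloring with a small hash family. A few places where the paper's execution differs from, or fills in, what you sketch are worth flagging.

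First, the step you correctly identify as the crux---the spectral gap---is handled in the paper by a structural trick you do not mention: pad every color layer with dummy vertices (and never-available edges) so that each layer has size exactly~$n$. After this, the discriminant $\Delta = V'V'^\dagger$ becomes an $\ell\times\ell$ block matrix whose $(k,k')$ block is a fixed linear combination of $I_n$ and $\tfrac1n J_n$, the coefficients depending only on the layer structure of~$T$, not on~$n$. A short linear-algebra claim (\refclm{blockmatrixspectrum}) then gives that the spectrum of such a matrix is independent of~$n$, so the gap is $\Omega(1)$. Your ``complete bipartite graphs between layers, hence well gapped'' is the right intuition, but the padding is what makes it literally true and is also what makes the per-vertex Grover diffusion uniform and hence $\tilde O(1)$-time implementable.

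Second, on paired edges: you propose the vertex-splitting announced after \refthm{star}. The paper's proof instead keeps the four-term vectors intact, scales them by $1/\sqrt2$, and lets each such input vector contribute a four-dimensional unit $\ket{b_j}$ in the factorization $V'=A^\dagger B$; the reflection $R_B$ on those columns is then a local $4\times4$ operation. Either route works, but the paper's avoids changing the span program and folds directly into the block-matrix calculation above (the paired-edge layers contribute $\pm\tfrac18 I_n$ and $\pm\tfrac1{8n}J_n$ blocks).

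Third, for the $O(\log n)$-space coloring in the star case the paper uses $\abs{V_T}$-wise independent hash functions rather than a perfect-hash family; $k$-wise independence already guarantees a correctly colored copy with probability $\abs{V_T}^{-\abs{V_T}}$, which is all the analysis needs. Your observation that the triangle algorithm cannot be derandomized this way---because its negative-case witness-size bound relies on full independence of the coloring---matches the paper exactly.
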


Intuitively, our span program-based algorithms are similar to running a quantum walk on the input graph~$G$.  However, $G$ is given only by an input oracle, and implementing a quantum walk on it directly would require too many input queries.  Instead, we run a quantum walk on a nearly complete graph that contains~$G$, and interpolate input queries in order, roughly, to simulate a walk on~$G$.  

In the proof of \refthm{efficient}, we need some basic facts about {\em $k$-wise independent hash functions}; see, e.g.~\cite{LubyWigderson95hash}.  This is a collection of functions $h_m: [n] \to [\ell]$ such that for any~$k$ distinct elements $a_1,\dots,a_k$, the probability over the choice of~$m$ that $(h_m(a_1),\dots,h_m(a_k))$ takes a particular value in $[\ell]^k$, is $\ell^{-k}$.  The simplest construction, that suffices for our purposes, is to assume $\ell \le n$ are powers of two, and define~$h_m$ as the $\log_2\ell$ lowest bits of the value of a random polynomial over $GF(n)$ of degree $k-1$.  Then $O(k \log n)$ bits suffice to specify~$h_m$, from which $h_m(a)$ can be calculated in $O(k \log^2 n)$ time.  

We will also need some further background in linear algebra.

\subsection{Linear algebra background} \label{sec:linearalgebra}

Results about the product of two reflections have many applications in quantum algorithms.  Let $A$ and~$B$ be matrices each with $n$ rows and orthonormal columns.  Let $\Pi_A = A A^\adjoint$ and $\Pi_B = B B^\adjoint$ be the projections onto $\cC(A)$ and $\cC(B)$, respectively.  Denote by $R_A = 2\Pi_A-I$ and $R_B = 2\Pi_B-I$ the reflections about the corresponding subspaces, and let $U = R_B R_A$ be their product.  Let $D(A, B) = A^\adjoint B$.

\begin{lem}[Spectral Lemma~\cite{Szegedy04walkfocs, Jordan75projections}] \label{lem:szegedy}
Under the above assumptions, all the singular values of $D(A,B)$ are at most~$1$.  Let $\cos\theta_1, \ldots, \cos\theta_\ell$ be all the singular values of $D(A,B)$ lying in the open interval~$(0,1)$, counted with multiplicity.  Then the following is a complete list of the eigenvalues of $U$: 
\begin{itemize}
\item The $+1$ eigenspace is $(\cC(A)\cap \cC(B))\oplus(\cC(A)^\perp \cap \cC(B)^\perp)$.
\item The $-1$ eigenspace is $(\cC(A)\cap \cC(B)^\perp)\oplus(\cC(A)^\perp \cap \cC(B))$.  Moreover, $\cC(A)^\perp \cap \cC(B) = B(\ker D(A,B))$.  
\item On the orthogonal complement of the above subspaces, $U$ has eigenvalues $e^{2 i \theta_j}$ and $e^{-2 i \theta_j}$ for $j\in[\ell]$.
\end{itemize}
\end{lem}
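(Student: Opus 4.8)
The plan is to diagonalize $U = R_B R_A$ by splitting $\C^n$ into $U$-invariant subspaces of dimension at most two, read off from a singular value decomposition of $D = D(A,B) = A^\adjoint B$.

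First I would dispose of the norm bound: since $A$ and $B$ have orthonormal columns, $\Pi_B \preceq I$ gives $D D^\adjoint = A^\adjoint \Pi_B A \preceq A^\adjoint A = I$, so every eigenvalue of $D D^\adjoint$, i.e.\ every squared singular value of $D$, lies in $[0,1]$. Next I would fix orthonormal bases $\{\ket{a_j}\}$ and $\{\ket{b_j}\}$ of the column-index spaces of $A$ and of $B$ and scalars $\sigma_j\in[0,1]$ (setting $\sigma_j=0$ for indices beyond the smaller of the two dimensions) with $\bra{a_i}D\ket{b_j}=\delta_{ij}\sigma_j$. Putting $\ket{\alpha_j}=A\ket{a_j}\in\cC(A)$ and $\ket{\beta_j}=B\ket{b_j}\in\cC(B)$, the families $\{\ket{\alpha_j}\}$ and $\{\ket{\beta_j}\}$ are orthonormal bases of $\cC(A)$ and $\cC(B)$, and $\braket{\alpha_i}{\beta_j}=\bra{a_i}D\ket{b_j}=\delta_{ij}\sigma_j$; in particular blocks indexed by distinct $j$ are mutually orthogonal.

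Now the case analysis on $\sigma_j$. If $\sigma_j=1$ then $\ket{\alpha_j}=\ket{\beta_j}\in\cC(A)\cap\cC(B)$ is fixed by both $R_A$ and $R_B$, hence is a $+1$ eigenvector of $U$. If $\sigma_j=0$ then $\ket{\alpha_j}$ is orthogonal to every $\ket{\beta_i}$, so $\ket{\alpha_j}\in\cC(A)\cap\cC(B)^\perp$ and $U\ket{\alpha_j}=R_B R_A\ket{\alpha_j}=R_B\ket{\alpha_j}=-\ket{\alpha_j}$; symmetrically $\ket{\beta_j}\in\cC(A)^\perp\cap\cC(B)$ is a $-1$ eigenvector, and since $D\ket{b_j}=0$ exactly when $\sigma_j=0$, the span of these $\ket{\beta_j}$ is precisely $B(\ker D(A,B))$. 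If $0<\sigma_j<1$, set $\cos\theta_j=\sigma_j$; then $\spn\{\ket{\alpha_j},\ket{\beta_j}\}$ is a two-dimensional $U$-invariant plane on which $R_A$ and $R_B$ act as reflections about the lines spanned by $\ket{\alpha_j}$ and $\ket{\beta_j}$, and those lines are at angle $\theta_j$, so the product is a rotation by $\pm2\theta_j$ with eigenvalues $e^{\pm 2i\theta_j}$ --- a direct $2\times2$ computation in an orthonormal basis of the plane.

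Finally I would assemble the pieces. Since $\{\ket{\alpha_j}\}$ spans $\cC(A)$ and $\{\ket{\beta_j}\}$ spans $\cC(B)$, the blocks above partition $\cC(A)+\cC(B)$ orthogonally as $(\cC(A)\cap\cC(B))\oplus(\cC(A)\cap\cC(B)^\perp)\oplus(\cC(A)^\perp\cap\cC(B))$ together with the two-dimensional rotation planes; adjoining $\cC(A)^\perp\cap\cC(B)^\perp=(\cC(A)+\cC(B))^\perp$, on which $R_A=R_B=-I$ and hence $U=I$ acts as $+1$, exhausts $\C^n$. The step needing the most care --- and the main obstacle --- is the bookkeeping that each orthogonal $\sigma_j$-block coincides exactly with the named intersection of subspaces (for instance that $\spn\{\ket{\alpha_j}:\sigma_j=1\}$ is all of $\cC(A)\cap\cC(B)$, not merely contained in it), which follows from uniqueness of the expansion of a vector in each of the two orthonormal bases, and correctly handling the index padding when $A$ and $B$ have different numbers of columns. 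The norm claim, the $2\times2$ rotation identity, and orthogonality across distinct indices are all routine.
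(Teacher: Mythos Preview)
Your argument is correct and is precisely the standard Jordan--Szegedy principal-angles proof. Note, however, that the paper does not give its own proof of this lemma: it is stated with a citation to \cite{Szegedy04walkfocs, Jordan75projections} and used as a black box. So there is nothing to compare against; your write-up would serve as a self-contained proof the paper omits.
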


A consequence of the Spectral Lemma is: 

\begin{lem}[Effective Spectral Gap Lemma~\cite{LeeMittalReichardtSpalekSzegedy11stateconversion}] \label{lem:effective}
For $\Theta \geq 0$, let $P_\Theta$ be the orthogonal projection to the span of all eigenvectors of $U$ with eigenvalues $e^{i \theta}$ such that $\abs{\theta} \le \Theta$.  Then for $\ket u \in \cC(A)^\perp$, 
\begin{equation}
\norm{P_\Theta \Pi_B \ket u} \le \frac{\Theta}{2} \norm{\ket u} \enspace .
\end{equation}
\end{lem}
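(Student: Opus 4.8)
The plan is to diagonalize the problem into the invariant blocks furnished by the Spectral Lemma (\reflem{szegedy}) and verify the bound block by block. First I would dispose of the trivial range: if $\Theta \ge \pi$ then $\Theta/2 > 1$, and since $P_\Theta$ and $\Pi_B$ are contractions, $\norm{P_\Theta \Pi_B \ket u} \le \norm{\ket u} \le \tfrac{\Theta}{2}\norm{\ket u}$, so there is nothing to prove. Hence assume $0 \le \Theta < \pi$.

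Next I would apply \reflem{szegedy} to decompose the ambient space orthogonally into subspaces each invariant under $\Pi_A$, $\Pi_B$ and $U$, hence under the spectral projection $P_\Theta$: the four one-dimensional spaces $\cC(A)\cap\cC(B)$, $\cC(A)^\perp\cap\cC(B)^\perp$, $\cC(A)\cap\cC(B)^\perp$, $\cC(A)^\perp\cap\cC(B)$, together with a family of two-dimensional spaces $S_j$ on which $\cC(A)$ and $\cC(B)$ appear as lines meeting at an angle $\theta_j\in(0,\pi/2)$ (with $\cos\theta_j$ ranging over the singular values of $D(A,B)$ in $(0,1)$), and on which $U$ acts as a rotation with eigenvalues $e^{\pm 2i\theta_j}$. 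I would then decompose $\ket u$ accordingly; because $\ket u\perp\cC(A)$, the components of $\ket u$ in $\cC(A)\cap\cC(B)$ and in $\cC(A)\cap\cC(B)^\perp$ vanish, leaving $\ket u = \ket{u_0}+\ket{u_\pi}+\sum_j\ket{u_j}$ with $\ket{u_0}\in\cC(A)^\perp\cap\cC(B)^\perp$ and $\ket{u_\pi}\in\cC(A)^\perp\cap\cC(B)$. Since $P_\Theta\Pi_B$ is block-diagonal for this decomposition, $\norm{P_\Theta\Pi_B\ket u}^2$ equals the sum of the squared norms of the block contributions, so it suffices to bound each contribution by $(\Theta/2)^2$ times the squared norm of the corresponding component of $\ket u$.

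The two exceptional blocks are immediate: $\Pi_B\ket{u_0}=0$, and $\ket{u_\pi}$ is a $(-1)$-eigenvector of $U$, i.e.\ has eigenphase $\pi>\Theta$, so $P_\Theta\Pi_B\ket{u_\pi}=P_\Theta\ket{u_\pi}=0$. For a two-dimensional block $S_j$, the key geometric point is that $\ket{u_j}\perp(\cC(A)\cap S_j)$ forces $\ket{u_j}$ to lie along the line of $S_j$ perpendicular to $\cC(A)\cap S_j$; that line makes angle $\pi/2-\theta_j$ with $\cC(B)\cap S_j$, so $\norm{\Pi_B\ket{u_j}}=\sin\theta_j\,\norm{\ket{u_j}}$. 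Now $P_\Theta$ restricted to $S_j$ is either the identity, if $2\theta_j\le\Theta$ (since both eigenphases $\pm2\theta_j$ then lie in the window), or zero, if $2\theta_j>\Theta$. In the first case $\norm{P_\Theta\Pi_B\ket{u_j}}\le\sin\theta_j\,\norm{\ket{u_j}}\le\theta_j\,\norm{\ket{u_j}}\le\tfrac{\Theta}{2}\,\norm{\ket{u_j}}$; in the second case the contribution is $0$. Summing the squared contributions and using $\norm{\ket{u_0}}^2+\norm{\ket{u_\pi}}^2+\sum_j\norm{\ket{u_j}}^2=\norm{\ket u}^2$ yields $\norm{P_\Theta\Pi_B\ket u}\le\tfrac{\Theta}{2}\norm{\ket u}$.

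I do not expect a genuine obstacle here; the points requiring care are the correct identification of the eigenphases of $U$ on each two-dimensional block as $\pm2\theta_j$ rather than $\pm\theta_j$ (this is exactly what \reflem{szegedy} supplies, and it is the source of the factor $\Theta/2$), the verification that the block decomposition is truly orthogonal so that the Pythagorean step is legitimate, and keeping straight which components of $\ket u$ survive once $\ket u\perp\cC(A)$ is used.
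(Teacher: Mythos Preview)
The paper does not actually prove this lemma; it merely states it as a consequence of \reflem{szegedy} and cites~\cite{LeeMittalReichardtSpalekSzegedy11stateconversion} for the original proof. Your argument is correct and is the standard one: decompose into Jordan blocks, observe that on each two-dimensional block the component of $\ket u$ orthogonal to $\cC(A)$ projects onto $\cC(B)$ with norm $\sin\theta_j$, and then use $\sin\theta_j\le\theta_j\le\Theta/2$ on the surviving blocks.

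One terminological slip worth fixing: the four intersection subspaces $\cC(A)\cap\cC(B)$, $\cC(A)^\perp\cap\cC(B)^\perp$, $\cC(A)\cap\cC(B)^\perp$, $\cC(A)^\perp\cap\cC(B)$ are not in general one-dimensional; each can have arbitrary dimension. This does not affect your argument, since $\Pi_A$, $\Pi_B$, and $U$ each act as a scalar on every one of these subspaces, so your case analysis for $\ket{u_0}$ and $\ket{u_\pi}$ goes through verbatim.
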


We will also use the following simple fact about the spectra of block matrices.  For $n \in \N$, let $I_n$ be the $n \times n$ identity matrix, and $J_n$ the $n \times n$ all-ones matrix.  

\begin{clm} \label{clm:blockmatrixspectrum}
Fix $\ell \times \ell$ symmetric matrices $A$ and~$B$.  For $n \in \N$, let $M_n = A \otimes I_n + \frac{1}{n} B \otimes J_n$.  Then the spectrum of~$M_n$, i.e., the set of eigenvalues sans multiplicities, is independent of~$n$.  
\end{clm}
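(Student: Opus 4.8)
The plan is to diagonalize the ``$J_n$-direction'' first, then read off the spectrum.  Write $\C^n = \mathrm{span}\{e\} \oplus e^\perp$, where $e = \tfrac{1}{\sqrt n}(1,\dots,1)^\adjoint$ is the normalized all-ones vector.  This decomposition simultaneously block-diagonalizes $I_n$ and $J_n$: on $\C^n$ we have $J_n = n\, e e^\adjoint$, so $J_n$ acts as multiplication by $n$ on $\mathrm{span}\{e\}$ and as $0$ on $e^\perp$, while $I_n$ acts as the identity on both pieces.  Tensoring with the fixed space $\C^\ell$ on which $A$ and $B$ act, we get an orthogonal decomposition $\C^\ell \otimes \C^n = (\C^\ell \otimes \mathrm{span}\{e\}) \oplus (\C^\ell \otimes e^\perp)$ that is invariant under $M_n = A \otimes I_n + \tfrac1n B \otimes J_n$.

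On the first summand $\C^\ell \otimes \mathrm{span}\{e\}$, which is naturally isomorphic to $\C^\ell$ via $v \mapsto v \otimes e$, the operator $M_n$ acts as $A + \tfrac1n (n) B = A + B$.  On the second summand $\C^\ell \otimes e^\perp$ (present exactly when $n \geq 2$), $M_n$ acts as $A \otimes I_{n-1}$ in the obvious identification, i.e.\ as $n-1$ independent copies of $A$.  Hence the multiset of eigenvalues of $M_n$ is the eigenvalues of $A+B$ (with their multiplicities in $A+B$) together with $n-1$ copies of the eigenvalues of $A$.  Stripping multiplicities, the \emph{set} of eigenvalues of $M_n$ is $\sigma(A+B) \cup \sigma(A)$ for every $n \geq 2$, and it is $\sigma(A+B)$ for $n = 1$.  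Since $\sigma(A) \subseteq \sigma(M_2)$ already, one even checks $\sigma(M_1) = \sigma(A+B) \subseteq \sigma(A+B)\cup\sigma(A) = \sigma(M_n)$ for all $n\ge 2$; in any case the set is independent of~$n$ over the range $n \geq 1$ up to this single degenerate point, and in particular it is independent of $n$ for all $n \geq 2$, which is all we need in the applications.

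There is essentially no obstacle here: the only thing to be slightly careful about is the $n=1$ edge case (where $e^\perp$ is empty), and the bookkeeping of ``set versus multiset'' of eigenvalues — the claim is only about the set, so the growing multiplicity of $\sigma(A)$ as $n$ increases is harmless.  The symmetry hypothesis on $A$ and $B$ is not strictly needed for the eigenvalue-set statement but makes $M_n$ symmetric, which is the context in which the claim is used.
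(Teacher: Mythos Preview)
Your proof is correct and follows essentially the same approach as the paper's: both diagonalize $\tfrac{1}{n}J_n$ first (the paper writes this as choosing an orthonormal eigensystem $\{\ket{u_i}\}$ with eigenvalues $\lambda_i\in\{0,1\}$, you use the explicit decomposition $\C^n=\spn\{e\}\oplus e^\perp$), observe that $M_n$ restricts to $A+\lambda_i B$ on each eigenspace, and conclude the spectrum is $\sigma(A)\cup\sigma(A+B)$ independently of~$n$. You are in fact more careful than the paper about the $n=1$ edge case, where only the eigenvalue $\lambda=1$ occurs and the spectrum is $\sigma(A+B)$ alone; the paper's statement is literally imprecise at $n=1$ (take e.g.\ $A=0$, $B=I_\ell$), but as you note this is irrelevant to the application.
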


\begin{proof}
Let $\{ \ket{u_i} : i \in [n] \}$ be an orthonormal eigensystem for $\tfrac{1}{n} J_n$, with corresponding eigenvalues $\lambda_i \in \{0, 1\}$.  For $i \in [n]$, let $M(i) = A + \lambda_i B$.  If $\ket v$ is an eigenvalue-$\lambda$ eigenvector of $M(i)$, then $\ket v \otimes \ket{u_i}$ is an eigenvalue-$\lambda$ eigenvector of $M_n$.  These derived eigenvectors span the whole $(\ell n)$-dimensional space, and hence the set of eigenvalues of $M_n$ does not depend on~$n$.  
\end{proof}

Essentially, the above argument works because $I_n$ and~$J_n / n$ commute and have spectra independent of~$n$.

\subsection{Algorithm for evaluating a span program} \label{sec:spanAlgorithm}

For evaluating our span programs, we essentially use Algorithm~$1$ from~\cite{Reichardt10advtight}.  However, this algorithm is described for canonical span programs only.  A canonical span program is a special case of a span program, literally corresponding to the dual of the adversary bound of a boolean function~\cite[Lemma 6.5]{Reichardt09spanprogram_arxivandfocs}.  Although it is known that any span program can be reduced to canonical form without cost to the witness size~\cite[Theorem 5.2]{Reichardt09spanprogram_arxivandfocs}, general span programs can be easier to work with both in constructing the span program and in developing a time-efficient implementation.  None of the span programs in this paper are canonical.  For completeness, we restate the algorithm and prove its correctness for general span programs.  The proof uses the Effective Spectral Gap Lemma from~\cite{LeeMittalReichardtSpalekSzegedy11stateconversion}.  

The free input vectors can be eliminated from any span program without affecting the witness size~\cite[Prop.~4.10]{Reichardt09spanprogram_arxivandfocs}.  They can be useful for implementing the algorithm time efficiently, however, but then must be charged for properly, as in the ``full witness size" complexity measure from~\cite{Reichardt09unbalancedformula}.  To do so, convert free input vectors to normal input vectors that are associated with an additional input variable~$x_0$ that is fixed to~$1$.  It will also be convenient to have some input vectors that are never available, also associated to~$x_0$.  Henceforth, we do not allow for free input vectors, and both always- and never-available input vectors are charged for in the witness size.   

Let $\cP$ be a span program with the target vector $\ket \target$ and $m - 1$ input vectors $\{ \ket{v_j} \}$ in $\R^d$.  Let $W_1$ and~$W_0$ be the positive and the negative witness sizes, respectively, and let $W = \sqrt{W_0 W_1}$ be the witness size of~$\cP$.  Also let $\ket{\tilde \target} = \ket \target / \alpha$, where $\alpha = C_1 \sqrt{W_1}$ for some constant~$C_1$ to be specified later.  

Let $V$ be the matrix containing the input vectors of $\cP$, and also $\ket{\tilde \target}$, as columns.  Our quantum algorithm works in the vector space $\cH = \R^m$, with standard basis elements $\ket{j}$ for $j = \{0, \ldots, m-1\}$.  Basis vectors~$\ket j$ for $j > 0$ correspond to the input vectors, and $\ket 0$ corresponds to $\ket{\tilde \target}$.  Let $\Lambda$ be the orthogonal projection onto the nullspace of $V$.  For any input $x$ of~$\cP$, let $\Pi_x = \sum \ketbra{j}{j}$ where the summation is over $j = 0$ and those indices $j > 0$ corresponding to the available input vectors on input~$x$.  

Let $U = R_\Lambda R_\Pi$, where $R_\Lambda = 2 \Lambda - I$ and $R_\Pi = 2 \Pi_x - I$ are the reflections about the images of $\Lambda$ and~$\Pi_x$.  Starting in $\ket 0$, the algorithm runs phase estimation~\cite{Kitaev95phaseestimation} on $U$ with precision $\Theta = \frac1{C_2 W}$, and accepts if and only if the measured phase is zero.  Here $C_2$ is another constant to be specified.  

\begin{thm}
Assume $C_1 W \ge 1$.  Then the above algorithm is correct and requires $O(W)$ controlled applications of $U$.  In each of these applications, $R_\Lambda$ requires no access to the input oracle, whereas $R_\Pi$ can be implemented in one oracle query.  
\end{thm}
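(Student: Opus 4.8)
The plan is to follow the standard analysis of algorithms built from products of two reflections, using the Effective Spectral Gap Lemma (Lemma~\ref{lem:effective}), and to handle the ``full witness size'' bookkeeping via the matrix $V$ that includes the rescaled target $\ket{\tilde\target}$ as its $0$th column. First I would establish the resource bound: phase estimation on $U$ to precision $\Theta = 1/(C_2 W)$ requires $O(1/\Theta) = O(W)$ controlled applications of $U$, and each $U = R_\Lambda R_\Pi$ uses one application of $R_\Lambda$ and one of $R_\Pi$. The reflection $R_\Pi = 2\Pi_x - I$ flips the sign of basis vectors $\ket j$ for $j$ a false input vector on input $x$; since whether $\ket j$ is available is a simple function of a single input bit $x_{i}$ (or is always/never available, for $j$ associated to $x_0$), this is a single oracle query plus input-independent bookkeeping. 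The reflection $R_\Lambda = 2\Lambda - I$ about the nullspace of $V$ makes no reference to $x$, proving the last two sentences.

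Next I would prove correctness, i.e. that measuring phase $0$ distinguishes $f_\cP(x) = 1$ from $f_\cP(x) = 0$ with bounded error, for an appropriate choice of constants $C_1, C_2$. In the positive case, let $\ket w, \ket{\wf}$ (after converting free vectors to $x_0$-vectors, there is just a combined positive witness) be a positive witness with $\norm{\ket w}^2 \le W_1$, so that $V$ restricted to available columns times the witness coefficients yields $\ket\target = \alpha\ket{\tilde\target}$. This gives a vector in $\cH$, supported on $\ket 0$ and available indices, that is an exact $+1$-eigenvector of $U$ — or more precisely, the combination $\ket 0 - (\text{witness part})$ lies in $\ker V$ hence is fixed by $R_\Lambda$, and is fixed by $R_\Pi$ since it is supported on available indices and $\ket 0$. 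Its overlap with the starting state $\ket 0$ is $\Omega(1)$ once $\alpha = C_1\sqrt{W_1}$ with $C_1$ large enough (using $\norm{\ket w}^2 \le W_1$ and $C_1 W \ge 1$), so phase estimation returns $0$ with constant probability.

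For the negative case, I would use the negative witness $\ket{w'}$ with $\braket{w'}{\target} = 1$, $\ket{w'}\perp\cC(V(x))$, and $\norm{V^\adjoint\ket{w'}}^2 \le W_0$. Form the vector $\ket u \in \cH$ with $\braket{0}{u} = \alpha$ and $\braket{j}{u} = -\braket{v_j}{w'}$ for $j > 0$; then $\ket u$ is orthogonal to $\cC$ of the matrix whose columns are $\ket{\tilde\target}$ and the input vectors normalized appropriately — concretely $\ket u \in \ker(V^\adjoint)^\perp$ arranged so that $R_\Lambda$ acts as $-I$ on the relevant component, i.e. $\ket u \in \cC(A)^\perp$ in the notation of Lemma~\ref{lem:effective} with $A$ the columns of $V$ in $\cH$. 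Because $\ket{w'}\perp$ the available input vectors, $\Pi_x\ket u$ has large overlap with $\ket 0$ (magnitude $\alpha = C_1\sqrt{W_1}$), while $\norm{\ket u}^2 = \alpha^2 + \norm{V^\adjoint\ket{w'}}^2 \le C_1^2 W_1 + W_0$. Applying Lemma~\ref{lem:effective} with $\Theta = 1/(C_2 W)$ bounds $\norm{P_\Theta \Pi_x \ket u} \le \frac{\Theta}{2}\norm{\ket u}$; the starting state $\ket 0$ lies in the direction of $\Pi_x\ket u$ up to normalization, so the amplitude of $\ket 0$ on eigenvalues with phase $\le \Theta$ is at most $\frac{\Theta}{2}\norm{\ket u}/\alpha = O(\sqrt{W_1 + W_0/W_1^2}/(C_2 W)) = O(1/C_2)$, which is small for $C_2$ large (using $W = \sqrt{W_0 W_1}$). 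Hence phase estimation returns a nonzero phase with high probability, completing correctness.

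The main obstacle is getting the negative-case normalization exactly right: one must verify that the vector $\ket u$ really lies in $\cC(A)^\perp$ for the matrix $A$ implicitly defined by $\Lambda$ (this is where including $\ket{\tilde\target}$ as the $0$th column of $V$, rather than treating the target separately, does the work), and then track the constants so that $C_1$ (controlling positive-case overlap) and $C_2$ (controlling negative-case leakage) can be chosen consistently with the hypothesis $C_1 W \ge 1$. Everything else is the now-routine phase-estimation-of-a-reflection-product argument; I would cite Lemma~\ref{lem:szegedy} and Lemma~\ref{lem:effective} for the spectral facts and keep the constant-chasing terse.
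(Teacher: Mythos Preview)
Your overall strategy is exactly the paper's: build a $+1$-eigenvector of $U$ from a positive witness, and in the negative case apply the Effective Spectral Gap Lemma (\reflem{effective}) using a vector built from a negative witness. The resource count and the positive case are fine and match the paper.

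The gap is in your negative-case vector. You set $\braket{0}{u}=\alpha$ and $\braket{j}{u}=-\braket{v_j}{w'}$, and then assert that $\ket u$ lies in $\cC(A)^\perp$, i.e., in $(\ker V)^\perp=\cC(V^\adjoint)$. It does not. A vector lies in $\cC(V^\adjoint)$ iff it equals $V^\adjoint\ket z$ for some $\ket z\in\R^d$, which forces its $0$th coordinate to be $\braket{\tilde\target}{z}$ and its $j$th coordinate to be $\braket{v_j}{z}$; your choice would require $\ket z+\ket{w'}$ to be orthogonal to every $\ket{v_j}$ while $\braket{\target}{z}=\alpha^2$, and these constraints are generically inconsistent. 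Without $\ket u\in(\ker V)^\perp$, \reflem{effective} does not apply and the bound on $\norm{P_\Theta\ket 0}$ does not follow. You correctly flagged this verification as ``the main obstacle,'' but the vector you wrote down fails it.

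The fix, which is what the paper does, is to take $\ket u=\alpha\,V^\adjoint\ket{w'}$ directly. Then $\ket u\in\cC(V^\adjoint)$ by construction, so $\Lambda\ket u=0$. Its $0$th coordinate is $\alpha\braket{\tilde\target}{w'}=1$ (not $\alpha$), and its $j$th coordinate is $\alpha\braket{v_j}{w'}$, which vanishes for available $j$; hence $\Pi_x\ket u=\ket 0$ exactly. Now \reflem{effective} gives
\[
\norm{P_\Theta\ket 0}=\norm{P_\Theta\Pi_x\ket u}\le\frac{\Theta}{2}\norm{\ket u}
=\frac{\sqrt{1+\alpha^2 W_0}}{2C_2 W}
\le\frac{\sqrt{2}\,C_1 W}{2C_2 W}=\frac{C_1}{\sqrt 2\,C_2},
\]
using $\alpha^2 W_0=C_1^2 W^2$ and the hypothesis $C_1 W\ge 1$. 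This also cleans up your final constant-chasing, which as written (the $O(\sqrt{W_1+W_0/W_1^2}/(C_2 W))$ expression) does not reduce to $O(1/C_2)$.
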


\pfstart
The statements apart from correctness are trivial.  The number of applications of~$U$ is equal to the inverse of the precision, up to a constant factor~\cite{NagajWocjanZhang09qma}.  Note also that the extra space required for phase estimation, beyond the space needed to implement~$U$, is logarithmic in the inverse precision.  

Assume that $f(x) = 1$.  In this case, we have to show there is a unit-length, eigenvalue-one eigenvector~$\ket u$ of~$U$ having a large overlap with $\ket 0$.  Let $\ket w$ be an optimal witness for $x$, and let $\ket{\tilde u} = \alpha \ket 0 - \sum_j w_j \ket j$, where~$w_j$ is the witness coefficient for the $j$th input vector.  Then $R_\Pi \ket{\tilde u} = \ket{\tilde u}$, because $\ket w$ uses available input vectors only.  Also, $V \ket{\tilde u} = \alpha \ket{\tilde t} - \sum_j w_j \ket{v_j} = \ket \target - \ket \target = 0$, and hence, $R_\Lambda \ket{\tilde u} = \ket{\tilde u}$.  Thus, $\ket{\tilde u}$ is an eigenvalue-one eigenvector of $U$.  Note that $\| \sum_j w_j \ket j \|^2 \le W_1 = \alpha^2 / C_1^2$; hence, $\ket u = \ket{\tilde u} / \norm{\ket{\tilde u}}$ has a large overlap with $\ket 0$ that can be tuned by adjusting the value of $C_1$.  

Now assume $f(x) = 0$.  Let $P_\Theta$ be the projection on the span of the eigenvalues of $U$ with eigenvalues $e^{i \theta}$ such that $\abs{\theta} \le \Theta$.  We have to prove that $\|P_\Theta \ket 0\|$ is small.  The idea is to apply \reflem{effective}.  Let~$\ket{w'}$ be an optimal witness for $x$.  Let $\ket u = \alpha V^\adjoint \ket{w'}$.  Since $\ket u \in \cC(V^\adjoint)$, we have $\Lambda \ket u = 0$.  Also, $\ket{w'}$ is orthogonal to all available input vectors, and $\alpha \braket{0}{w'} = 1$; hence $\Pi_x \ket u = \ket 0$.  By \reflem{effective}, 
\[
\| P_\Theta \ket 0 \| 
= \| P_\Theta \Pi_x \ket u \| 
\le \frac{\Theta}{2} \|\ket u\| 
\le \frac{\sqrt{1 + \alpha^2 W_0}}{2 C_2 W} 
\le \frac{C_1 \sqrt{W_1 W_0}}{C_2 W} 
= \frac{C_1}{C_2} \enspace .
\]
The algorithm's acceptance probability can be improved by increasing the value of~$C_2$.  
\pfend

\subsection{Implementing \texorpdfstring{$R_\Lambda$}{R\_Lambda}}

The reflection $R_\Pi$ can be implemented efficiently in most cases, but implementing $R_\Lambda$ is more difficult.  Since many functions have larger time complexity than query complexity, this should be expected.  In this section, we describe a general way of implementing $R_\Lambda$, which is efficient for relatively uniform span programs like those in this paper.  

Essentially, we consider the $d \times m$ matrix $V$ as the biadjacency matrix for a bipartite graph on $d + m$ vertices, and run a Szegedy-type quantum walk as in~\cite{AmbainisChildsReichardtSpalekZhang07andor, ReichardtSpalek08spanprogram}.  Such a quantum walk requires ``factoring" $V$ into two sets of unit vectors, vectors $\ket{a_i} \in \R^m$ for each row $i \in [k]$, and vectors $\ket{b_j} \in \R^d$ for each column $j = 0, \ldots, m-1$, satisfying $\braket{a_i}{j} \braket{i}{b_j} = V'_{ij}$, where $V'$ differs from~$V$ only by a rescaling of the rows.  (In general, multiplying $V$ from the left by any non-degenerate matrix, and in particular rescaling its rows, does not affect the nullspace.)  Given such a factorization, let $A = \sum_{i \in [d]} (\ket i \otimes \ket{a_i}) \bra i$ and $B = \sum_{j=0}^{m-1} (\ket{b_j} \otimes \ket j) \bra j$, so $A^\dagger B = V'$.  Let $R_A$ and $R_B$ be the reflections about the column spaces of $A$ and~$B$, respectively.  
Embed $\cH$ into $\tcH = \R^k \otimes \R^m$ using the isometry $B$.  Then $R_\Pi$ can be implemented on $B(\cH) = \cC(B)$ by detecting~$j$ from the representation of $\ket i \otimes \ket j$ and multiplying the phase by~$-1$ if $\ket{v_j}$ is an unavailable input vector.  $R_\Lambda$ can be implemented on $B(\cH)$ as the reflection about the $-1$ eigenspace of~$R_B R_A$.  Indeed, by \reflem{szegedy}, this eigenspace equals $(\cC(A)^\perp \cap \cC(B)) \oplus (\cC(A) \cap \cC(B)^\perp)$, or $B (\ker V)$ plus a part that is orthogonal to $\cC(B)$ and therefore irrelevant.  

The reflection about the $-1$ eigenspace of $R_B R_A$ is implemented using phase estimation.  The efficiency depends on two factors: 
\begin{enumerate}
\item The implementation costs of $R_A$ and $R_B$.  They can be easier to implement than $R_\Lambda$ directly, because they decompose into local reflections.  The reflection $R_A$ about the columns of~$A$ equals a reflection about $\ket{a_i}$ controlled by the column~$i$, and similarly for~$R_B$.  
\item The spectral gap around the $-1$ eigenvalue of $R_B R_A$ necessary to implement the reflection about the $-1$ eigenspace.  By \reflem{szegedy}, this gap is determined by the spectral gap of $D(A, B) = A^\adjoint B = V'$ around singular value zero.
\end{enumerate}

So far the arguments have been general.  Let us now specialize to the span programs in Theorems~\ref{thm:st}, \ref{thm:star} and~\ref{thm:triangle}.  These span programs are sufficiently uniform that neither of the above two factors is a problem.  Both reflections can be implemented efficiently, in poly-logarithmic time, using quantum parallelism.  Similarly, we can show that $D(A, B)$ has an $\Omega(1)$ spectral gap around singular value zero.  Therefore, approximating to within an inverse polynomial the reflection about the $-1$ eigenspace of $R_B R_A$ takes only poly-logarithmic~time.  

\pfstart[Proof of \refthm{efficient}]
We give the proof for the algorithms from Theorems~\ref{thm:star} and~\ref{thm:triangle}.  The argument for $st$-connectivity, \refthm{st}, is similar and actually easier.  

Both algorithms look similar.  In each case, the span program is based on a graph~$H$, whose vertices form an orthonormal basis for the span program vector space.  The vertices of~$H$ can be divided into a sequence of layers that are monochromatic according to the coloring~$c$ induced from~$G$, such that edges only go between consecutive layers.  Precisely, place the vertices $s$ and~$t$ each on their own separate layer at the beginning and end, respectively, and set the layer of a vertex~$v$ to be the distance from~$s$ to $c(v)$ in the graph~$H$ for the case that $G = T$.  For example, in the span program for detecting a subdivided star with branches of lengths $\{ \ell_1, \ldots, \ell_d \}$, there are $\ell = 2 + 2 \sum_{j \in [d]} (\ell_j + 1)$ layers, because the $s$-$t$ path is meant to traverse each branch of the star out and back.  There are $\ell = 6$ layers of vertices for the triangle-detection span program.  

In order to facilitate finding factorizations~$\{ \ket{a_i} \}$ and~$\{ \ket{b_j} \}$ such that~$R_A$ and~$R_B$ are easily implementable, we make two modifications to the span programs.  

First, the span programs as presented depend on the random coloring of~$G$.  This dependence makes it difficult to specify a general factorization of~$V$.  To fix this, if~$G$ has $n$ vertices, add dummy vertices to every layer of the graph so that every layer has size~$n$.  Fill in the graph with never-available edges between adjacent layers, including between the layers of~$s$ and~$t$, so that every vertex has degree~$2 n$.  If the edges in two layers are paired, then also pair corresponding newly added edges; each edge pair corresponds to one never-available, four-term input vector.  This transformation is equivalent to making the coloring part of the input, in the following sense: if $v$ and~$v'$ are vertices in adjacent layers, corresponding to vertices~$u$ and~$u'$ of~$G$, then the $(v, v')$ edge input vector is available if $(u, u')$ is an edge in~$G$ \emph{and} if $u$ and~$u'$ are both colored appropriately.

Second, scale the input vectors corresponding to paired edges down by a factor of $\sqrt 2$.  
Connect~$s$ and~$t$ by \emph{two} edges, the first corresponding to the scaled target vector $\ket{\tilde \target} = \frac{1}{\alpha}(\ket t - \ket s)$, and the second a never-available input vector $\sqrt{1 - 1/\alpha^2} (\ket t - \ket s)$.  We may assume that $\alpha = C_1 \sqrt{W_1} \geq 1$.  

It is easy to verify that the span program after this transformation still computes the same function, and the positive and the negative witness sizes remain $O(1)$ and $O(n^2)$, respectively.  After the modifications, the graph~$H$ has a simple uniform structure that allows for facile factorization.  There is a complete bipartite graph between any two adjacent layers.  

We specify a vector $\ket{a_i}$ for each vertex~$i$ of the graph.  For $i \notin \{s, t\}$, let~$\ket{a_i}$ be the vector with uniform $1/\sqrt{2n}$ coefficients for all incident edges.  For $i \in \{s, t\}$, let $\ket{a_i}$ have coefficients $1/(\alpha\sqrt{2n})$ and $\sqrt{(1-1/\alpha^2)/(2n)}$ for the two edges between~$s$ and~$t$, and coefficients $1/\sqrt{2n}$ for the other $2n-1$ edges.  
For any edge or pair of paired edges---that is, for each of the input vectors and the target vector---we specify a vector~$\ket{b_j}$.  For an ordinary edge~$j$, let $\ket{b_j}$ be the vector with $\frac{1}{\sqrt 2}(1, -1)$ coefficients on the vertices connected by~$j$ and zeros elsewhere.  For a pair of paired edges corresponding to the input vector $\ket{v_j} = \frac{1}{\sqrt 2}(\ket{i} + \ket{i'} - \ket{i''} - \ket{i'''})$, let $\ket{b_j} = \frac{1}{\sqrt 2} \ket{v_j}$.  Then these $\ket{a_i}$ and $\ket{b_j}$ vectors give a factorization of $V' = \frac{1}{2 \sqrt n} V$, i.e., $\braket{a_i}{j} \braket{i}{v_i} = \frac{1}{2 \sqrt n} V_{i,j} = \frac{1}{2 \sqrt n} \braket{i}{v_j}$.  

Let us analyze the spectral gap around zero of $D(A, B) = A^\adjoint B = V'$.  The non-zero singular values of~$V'$ are the square roots of the non-zero eigenvalues of $\Delta =  V' V'^\dagger = \sum_{i, i'} \big( \frac{1}{4n} \sum_j \braket{i}{v_j} \braket{v_j}{i'} \big) \ketbra{i}{i'}$.  We need to compute~$\Delta$.  A vertex~$i$ can be represented by a tuple~$(k, \sigma) \in [\ell] \times [n]$, where $k$ specifies one of the~$\ell$ layers and~$\sigma$ specifies a vertex within the layer.  Let $\Delta(k, k')$ be the $n \times n$ submatrix of~$\Delta$ between vertices at layers~$k$ and~$k'$.  To calculate $\Delta(k, k')$, we consider separately the contributions from all of the different layers of input vectors.  

\begin{enumerate}
\item
Ordinary edges between adjacent layers~$k$ and~$k'$ contribute $\frac{1}{4} I_n$ to $\Delta(k,k)$ and $\Delta(k',k')$, and $-\frac{1}{4n} J_n$ to $\Delta(k,k')$ and $\Delta(k',k)$.  Indeed, for the contribution to $\Delta(k,k)$, observe that any vertex $(k, \sigma)$ has~$n$ incident ordinary edges to layer~$k'$, and each incident edge~$j$ contributes a term $\frac{1}{4n} \abs{\braket{i}{v_j}}^2 = \frac{1}{4n}$.  There is no ordinary edge involving vertices $(k,\sigma)$ and $(k,\sigma')$ with $\sigma \neq \sigma'$, but for any $\sigma, \sigma' \in [n]$, there is exactly one ordinary edge~$j$ from $(k, \sigma)$ to $(k', \sigma')$, and it contributes $-\frac{1}{4n}$ to $\Delta(k, k')_{\sigma, \sigma'}$.  

Even though~$s$ and~$t$ are connected by two edges, the same calculations hold for the edges between their layers.  

\item
Consider a set of paired edges, that go out from layer~$k_1$ to~$k_2$, and then return from layer~$k_3$ to~$k_4$.  Each input vector~$\ket{v_j}$ is of the form $\frac{1}{\sqrt 2} \big(- \ket{(k_1, \sigma)} + \ket{(k_2, \sigma')} - \ket{(k_3, \sigma')} + \ket{(k_4, \sigma)}\big)$.  The four layers $k_1, \ldots, k_4$ are distinct.  The contributions of these paired edges to the sixteen blocks $\Delta(k_\alpha, k_\beta)$ are given by the $4 \times 4$ block matrix 
\begin{equation*}
\begin{pmatrix}
\;\;\;\;\; \frac{1}{8} I_n 	& \!\!\! -\frac{1}{8n} J_n 		& \!\!\! \;\;\, \frac{1}{8n} J_n 	& \!\!\! \;\, -\frac{1}{8} I_n \\[1pt]
-\frac{1}{8n} J_n 		& \!\!\! \;\;\;\;\; \frac{1}{8} I_n 	& \!\!\! \;\, -\frac{1}{8} I_n 		& \!\!\! \;\;\, \frac{1}{8n} J_n \\[1pt]
\;\;\, \frac{1}{8n} J_n 	& \!\!\! \;\, -\frac{1}{8} I_n 		& \!\!\! \;\;\;\;\; \frac{1}{8} I_n 	& \!\!\! -\frac{1}{8n} J_n \\[1pt]
\;\, -\frac{1}{8} I_n 		& \!\!\! \;\;\, \frac{1}{8n} J_n 	& \!\!\! -\frac{1}{8n} J_n 		& \!\!\! \;\;\;\;\; \frac{1}{8} I_n
\end{pmatrix}
\place{\small $k_1$}{-290mu}{18pt}
\place{\small $k_2$}{-290mu}{6pt}
\place{\small $k_3$}{-290mu}{-6pt}
\place{\small $k_4$}{-290mu}{-18pt}
\place{\small $k_1$}{-220mu}{30pt}
\place{\small $k_2$}{-160mu}{30pt}
\place{\small $k_3$}{-100mu}{30pt}
\place{\small $k_4$}{-40mu}{30pt}
 \enspace .
\end{equation*}
Indeed, vertices $(k_\alpha, \sigma)$ and $(k_\alpha, \sigma')$ are not shared by any paired edges~$j$, i.e., $\braket{(k_\alpha,\sigma)}{v_j} \braket{v_j}{(k_\alpha, \sigma')} = 0$, unless $\sigma = \sigma'$.  If $\sigma = \sigma'$, then each of~$n$ paired edges contributes $\frac{1}{8 n}$.  A similar argument holds for $\Delta(k_1, k_4)$ and $\Delta(k_2, k_3)$, except in these cases the paired edges each contribute $-\frac{1}{8 n}$.  For $\alpha \in \{1, 4\}$ and $\beta \in \{2, 3\}$, any two vertices $(k_\alpha, \sigma)$, $(k_\beta, \sigma')$ are shared by exactly one paired edge.  
\end{enumerate}
Observe that~$\Delta$ is a constant-sized block matrix, where each block is the sum of a constant multiple of~$I_n$ and a constant multiple of~$J_n/n$.  By \refclm{blockmatrixspectrum}, the set of eigenvalues of $\Delta$ does not depend on~$n$.  In particular, it has an $\Omega(1)$ spectral gap from zero, as desired.  

We now show that both $R_A$ and $R_B$ can be implemented efficiently.  As described earlier, the algorithm works in the Hilbert space spanned by vectors $\ket i \otimes \ket j$, where $j$ varies over input vectors and the target vector, and $i$ varies over vertices for which $\braket{i}{v_j} \neq 0$.  A more convenient representation for such pairs $(i, j)$ is as a tuple $(k, \sigma, \tau, s) \in [\ell] \times [n] \times [n] \times \{+, -\}$, where~$k$ specifies one of the $\ell$ layers, and $\sigma, \tau$ specify the endpoints of an edge from layer~$k$ either to the next layer (if $s = +$) or to the previous layer (if $s = -$).  This representation works whether~$j$ is an ordinary edge or a pair of paired edges.  Two additional tuples, $(1,1,0,-)$ and $(\ell,1,0,+)$, are needed, though, for~$j$ the second edge from~$s$ to~$t$, i.e., the never-available input vector $\sqrt{1 - 1/\alpha^2} (\ket t - \ket s)$.  The states $\ket{k, \sigma, \tau, s}$ can be stored using a logarithmic number of qubits.  

We start with the description of $R_A$.  For all $i = (k, \sigma) \in [\ell] \times [n]$ except $s = (1, 1)$ and $t = (\ell, 1)$, $\ket i \otimes \ket{a_i}$ is the uniform superposition of the states $\{\ket{k, \sigma, \tau, s} : \tau \in [n], s \in \{+, -\} \}$, so the reflection is a Grover diffusion operation.  For $(k, \sigma) = (1, 1) = s$, we perform a slightly different operation.  Let $F$ be the Fourier transform on the space spanned by $\{\ket{1,1,\tau,\pm} : \tau \in [n]\}$ that maps $\ket{1,1,1,-}$ to the uniform superposition; let $K$ be a unitary on $\spn \{\ket{1,1,1,-}, \ket{1,1,0,-}\}$ that maps $\ket{1,1,1,-}$ to $(1/\alpha) \ket{1,1,1,-} + \sqrt{1-1/\alpha^2} \ket{1,1,0,-}$; and let $L$ multiply every phase, except that of $|1,1,1,-\rangle$, by~$-1$.  Then the necessary transformation can be implemented as $F K L K^{-1} F^{-1}$.  A similar operation works for $(k, \sigma) = (\ell, 1) = t$.  

The implementation of $R_B$ is similar.  For layers $k$ and $k+1$ with only ordinary edges between them, it suffices to apply the negated swap to all pairs $(\ket{k,\sigma,\tau,+}, \ket{k+1,\tau,\sigma,-})$.  This can be done in logarithmic time.  For paired layers, the reflection about $\ketbra{b_j}{b_j}$ is performed in a four-dimensional subspace.  

Finally, for the implementation of $R_\Pi$ we need to clarify the use of the random coloring.  One solution is to generate random numbers classically, and provide them in the form of an oracle mapping $\sigma \in [n]$ to the color of vertex~$\sigma$.  This requires coherent access to $\Theta(n)$-bit string.  For \refthm{star}, however, one can reduce the space complexity to $O(\log n)$, by using a $C$-uniform hash function family from~$[n]$ to~$[C]$, where $C$ is the total number of colors.  If necessary, we may assume that $n$ and~$C$ are powers of two.  $C$-wise independence is enough for the proof.  For \refthm{triangle}, this does not work, though, because we need to ensure that the negative witness size is small with high probability.  

Consider layer $k$ that corresponds to color~$c$.  A vertex $(k, \sigma)$ corresponds to vertex~$\sigma$ of $G$ if and only if it has color~$c$.  Otherwise, it is a dummy vertex.  To check whether the edge is available, the algorithm checks the layers that it connects.  If they are connected by ordinary edges of~$H$, it queries both endpoints to check if they have the correct colors.  If they do, it executes the input oracle, to check for the availability of the edge.  If the edge is available, it does nothing.  In all other cases, it negates the phase of the state.  
\pfend

\section{Algorithm for detecting a star with two subdivided edges} \label{sec:startwo}

In the introduction, via a reduction to $st$-connectivity, we gave an $O(n)$-query quantum algorithm for detecting the presence of a fixed-length path in an $n$-vertex graph~$G$ given by its adjacency matrix.  In this section, we generalize the reduction to the problem of detecting as a subgraph a star with two subdivided legs.  

\begin{thm} \label{thm:startwo}
Fix $T$ a star with two subdivided legs.  Then there exists a quantum algorithm that, given query access to the adjacency matrix of a simple graph $G$ with $n$ vertices, makes $O(n)$ queries, and except with error probability at most~$1/3$ accepts if and only if $G$ contains $T$ as a subgraph.  
\end{thm}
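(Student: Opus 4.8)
The plan is to generalize the reduction to $st$-connectivity that was used for path detection in the introduction. Write~$T$ as a star centered at a vertex~$r$ with two subdivided legs, of lengths $a,b\ge 2$, and with $k\ge 1$ pendant leaves $\ell_1,\dots,\ell_k$ at~$r$, so that $\deg_T r=k+2$ and $|V_T|=a+b+k+1$. Gluing the two subdivided legs at~$r$ produces a path of length $a+b$, and I would color the vertices of this path by $1,2,\dots,a+b+1$ with~$r$ receiving color $a+1$, and assign $k$ further distinct colors $L_1,\dots,L_k$ to the leaves. Color every vertex of~$G$ independently and uniformly by one of these $|V_T|$ colors. A fixed $T$-subgraph then receives a coloring consistent with a placement of~$T$ (the $i$-th path color on the $i$-th path vertex, the distinct leaf colors on the leaves) with probability $|V_T|^{-|V_T|}=\Omega(1)$, so amplifying over a constant number of colorings suffices to reach error below~$1/3$.

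For a fixed coloring~$c$, I would construct an auxiliary graph~$H$ and run on it an $st$-connectivity span program in the style of the proof of \refthm{st}. Besides~$s$ and~$t$, the graph~$H$ has one vertex for each vertex of~$G$ whose color lies in $\{1,\dots,a\}\cup\{a+2,\dots,a+b+1\}$, and, for each vertex~$v$ of color $a+1$, a chain of $k+1$ copies $v^{[0]},v^{[1]},\dots,v^{[k]}$. Exactly as for path detection, one joins consecutively-colored vertices, joins~$s$ to the color-$1$ vertices and~$t$ to the color-$(a+b+1)$ vertices, and joins each color-$a$ vertex to the $v^{[0]}$ end of each adjacent color-$(a+1)$ chain and each color-$(a+2)$ vertex to the corresponding $v^{[k]}$ end, availability being governed by the edges of~$G$. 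Finally, for each $i\in[k]$ and each color-$(a+1)$ vertex~$v$, one adds an edge $(v^{[i-1]},v^{[i]})$ that is available precisely when~$v$ has a neighbor of color~$L_i$ in~$G$; in the span program this edge is realized by one two-term input vector $\ket{v^{[i]}}-\ket{v^{[i-1]}}$, labeled by the edge $(v,z)$, for each candidate $z\in c^{-1}(L_i)$. Note that leaf-colored vertices of~$G$ never appear in~$H$ (they only gate the chain edges) and that there are no paired edges. When a consistently-colored $T$-subgraph is present, the input vectors along the obvious $s$-$t$ path of length $|V_T|+1$ sum to the target $\ket t-\ket s$, so the positive witness size is $O(1)$; when~$s$ and~$t$ lie in different components of the available subgraph of~$H$, the indicator vector of~$t$'s component is a negative witness that overlaps each of the $O(n^2)$ false input vectors by at most~$1$, so the negative witness size is $O(n^2)$. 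Hence $\wsize=O(n)$, and \refthm{span} gives an $O(n)$-query algorithm; the matching lower bound is \refprp{optimal}.

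The main thing that must be proved is that, for \emph{every} coloring~$c$, $s$ is connected to~$t$ in the available subgraph of~$H$ if and only if~$G$ contains~$T$ as a subgraph; this equivalence is exactly what removes the need for a not-a-minor promise. The ``if'' direction uses a consistently-colored copy of~$T$ and is immediate. The ``only if'' direction is the delicate part: because~$H$ is only layered, with edges between consecutive layers, a simple $s$-$t$ path in~$H$ can zigzag --- for example it can bounce off some vertex $v^{[0]}$ through two distinct color-$a$ neighbors of~$v$ --- so it need not spell out a consistently-colored placement of~$T$. The way I would try to handle this is: the cut between the $v^{[0]}$ layer and the $v^{[1]}$ layer is crossed only by edges of the form $v^{[0]}$--$v^{[1]}$, and using such an edge forces a simple path to traverse that whole chain (the internal chain vertices have degree two); hence any connecting path traverses in full the chain of some color-$(a+1)$ vertex~$w$, which forces~$w$ to have $k+2$ pairwise distinct neighbors in~$G$ (its $k$ leaf-colored gate neighbors, one color-$a$ neighbor, and one color-$(a+2)$ neighbor, all of different colors). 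The two portions of the path lying on the two sides of~$w$'s chain, projected back to~$G$, should then be prunable to two simple paths from~$w$ of lengths~$a$ and~$b$ that are vertex-disjoint and avoid all $k$ leaf neighbors --- using that leaf-colored vertices never appear in~$H$ and that the path is simple in~$H$ --- and assembling these yields a genuine $T$-subgraph. Carrying out this pruning rigorously, showing the two arms can always be chosen disjoint however the $H$-path zigzags through other centers' chains, is the crux of the argument, and it is precisely here that the restriction to a star with \emph{two} subdivided legs is used: a single $st$-path can carry only two long arms out of~$w$, so three or more subdivided legs would not be captured this way. Finally, the span program is uniform enough that the techniques of \refsec{efficient} would also yield a time-efficient implementation, although the stated theorem asks only for the query bound.
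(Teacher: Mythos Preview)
Your construction is essentially the paper's: color by the vertices of~$T$, split each hub-colored vertex into a chain whose steps are gated by the existence of appropriately leaf-colored neighbors, and run $st$-connectivity. The paper inserts an explicit intermediate vertex $\mu_u$ in each chain step (so its chains have length $2d$ rather than your $k$), but this is cosmetically different and yields the same $st$-connectivity structure, the same $O(n^2)$ count of input vectors, and the same $O(1)$ positive witness size.

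The gap is exactly where you place it: you flag the ``only if'' direction as the crux but do not prove it, and the method you sketch (take an arbitrary simple $s$--$t$ path, then prune the two projected arms to make them disjoint despite possible zigzagging through other hubs' chains) is harder than necessary and you do not carry it out. The difficulty you anticipate is illusory, because your graph~$H$ is \emph{layered}: every edge joins two consecutive layers. Hence if $s$ and~$t$ are connected at all, a \emph{shortest} $s$--$t$ path visits each layer exactly once, and such a monotone path has the form
\[
s,\; g_1,\; \ldots,\; g_a,\; w^{[0]},\; w^{[1]},\; \ldots,\; w^{[k]},\; g_{a+2},\; \ldots,\; g_{a+b+1},\; t
\]
with $c(g_j)=j$. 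This path exhibits~$T$ in~$G$ directly: the gates on~$w$'s chain supply the $k$ pendant neighbors of~$w$, and the two arms $w,g_a,\ldots,g_1$ and $w,g_{a+2},\ldots,g_{a+b+1}$ are vertex-disjoint from each other and from the pendants simply because all the colors involved are distinct. No pruning is required. The paper dispatches this direction in one sentence, and the layered structure of~$H$ is what justifies that brevity; once you observe it, your proof is complete and matches the paper's.
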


\begin{proof}
Let the vertex set of $T$ be $V_T = \{ v_1, \ldots, v_k \} \cup \{ w_1, \ldots, w_d \}$, where the edges are $\{ (v_j, v_{j+1}) : j \in [k-1] \} \cup \{ (v_\ell, w_i) : i \in [d] \}$; the vertex $v_\ell$ is the hub of the star.  Without loss of generality, we may assume that $\ell \in \{2, \ldots, k-1\}$.  Let $c$ be a uniformly random map from the vertex set of~$G$ to~$V_T$.  

Define an instance of $st$-connectivity by transforming~$G$ into a graph~$H$ as follows.  The vertex set of~$H$ is 
\begin{equation*}
\{ s, t \} \cup c^{-1}(\{v_1, \ldots, v_{\ell-1}, v_{\ell+1}, \ldots, v_k\}) \cup 
\!\!\bigcup_{\substack{u \in c^{-1}(v_\ell)}}\!\! \Big( \{ u_0, \ldots, u_d \} \cup \big\{ \mu_u : \mu \in c^{-1}(\{ w_1, \ldots, w_d \}) \big\} \Big)
 \enspace .
\end{equation*}
Thus each vertex $u \in c^{-1}(v_\ell)$ is split into $1 + d$ copies, whereas each vertex $\mu \in c^{-1}(w_i)$ is split into $\abs{c^{-1}(v_\ell)}$ copies.  There are $O(n^2)$ vertices total.  

There are free edges $(s, u)$ for every $u \in c^{-1}(v_1)$, and $(u, t)$ for every $u \in c^{-1}(v_k)$.  
For every $u \in c^{-1}(v_j)$ and $u' \in c^{-1}(v_{j+1})$, if $G$ has the edge $(u, u')$, then $H$ has the edge either $(u, u')$ if $\ell \notin \{j, j+1\}$, $(u, u'_0)$ if $\ell = j+1$, or $(u_d, u')$ if $\ell = j$.  Finally, for every edge $(u, \mu) \in c^{-1}(v_\ell) \times c^{-1}(w_i)$, $H$ has the two edges $(u_{i-1}, \mu_u)$ and $(\mu_u, u_i)$.  The total number of possible edges, i.e., input vectors for the $st$-connectivity span program, is $\abs{c^{-1}(\{v_1, v_k\})} + \prod_{j \in [k-1]} \abs{c^{-1}(v_j)} \abs{c^{-1}(v_{j+1})} + 2 \abs{c^{-1}(v_\ell)} \abs{c^{-1}(\{w_i\})} = O(n^2)$.  An example is given in \reffig{startwo}.  

\begin{figure}
\vspace{-.4cm}
\centering
\subfigure[]{\raisebox{.74cm}{\includegraphics{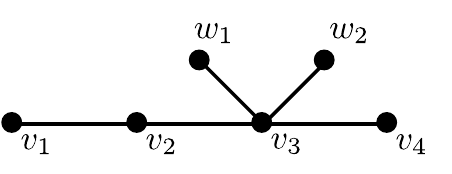}}}
$\quad$
\subfigure[]{\raisebox{0cm}{\includegraphics{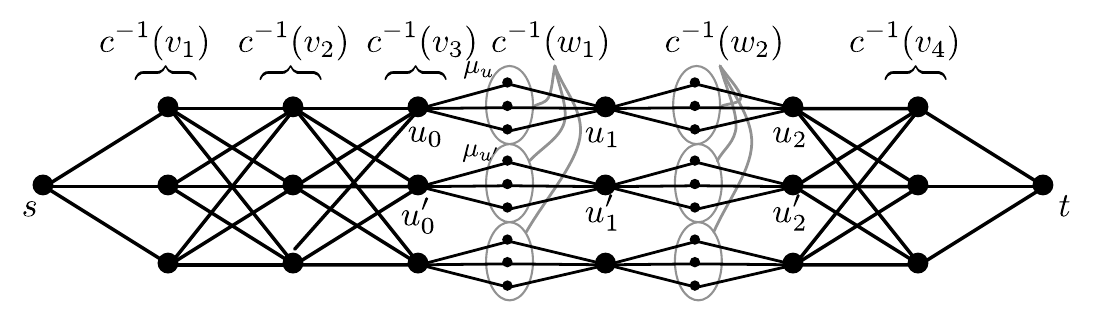}}}
\caption{(a) $T$ a star with one subdivided edge; $k = 4$, $\ell = 3$, $d = 2$.  (b) The vertices and set of possible edges in~$H$ for an input with $n = 18$ vertices colored evenly, i.e., $\abs{c^{-1}(v)} = 3$ for all $v \in V_T$.}
\label{fig:startwo}
\end{figure}

If $G$ contains~$T$ as a subgraph, there is a positive constant probability that it is colored correctly.  Then $s$ is connected to~$t$ by a path of length $k+2d+1$, and the span program witness size is $O(1)$.  

If $G$ does not contain~$T$ as a subgraph, then the construction guarantees that $s$ is not connected to~$t$.  Indeed, for there to be a path $u_0, \mu_u, u_1, \ldots, u_d$ through~$H$, $G$ must have a $d$-leg star centered at~$u$, and for there to be paths from $s$ to~$u_0$ and from~$u_d$ to~$t$, $G$ must further have vertex-disjoint paths of lengths at least~$\ell - 1$ and~$k - \ell$ attached to~$u$.  The witness size is $O(n^2)$, since there are only $O(n^2)$ input vectors.  
\end{proof}

Unlike in \refthm{star}, the breadcrumb trick of using paired edges is not needed here.  The path through~$G$ induced by a path through~$H$ from~$u_0$ to~$u_d$ goes at most one step from the hub vertex $u \in c^{-1}(v_\ell)$.  Instead of using paired edges to remember where the path came from, it is therefore enough to make copies of the vertices in $c^{-1}(\{w_1, \ldots, w_d\})$.  Unlike in \refthm{star}, no promise on the input~$G$ is required for \refthm{startwo}.  Observe, however, that the derived graph~$H$ does have a certain promised structure, which ensures that there are only $O(n^2)$ possible edges or span program input vectors, even though $H$ has $O(n^2)$ vertices.  

We omit the details, but it can be shown along the lines of \refthm{efficient} that this algorithm can be implemented in logarithmic space, with only a poly-logarithmic time overhead.  

The same technique as used in \refthm{startwo}, except splitting up every vertex in $c^{-1}(\{v_1, \ldots, v_k\})$, works to give an $O(n)$-query quantum algorithm for the subgraph/not-a-minor promise problem for a fixed ``fuzzy caterpillar" graph~$T$, having vertices $\{ v_1, \ldots, v_k \} \cup \bigcup_{j \in [k]} \{ w_{j, 1}, \ldots, w_{j, d_j} \}$, and edges $\{ (v_j, v_{j+1}) : j \in [k-1] \} \cup \{ (v_j, w_{j, i}) : j \in [k], \, i \in [d_j] \}$.  The algorithm does not solve the subgraph-detection problem for this~$T$ because the $s$-$t$ path can zig-zag back and forth between vertices in $c^{-1}(v_j)$ and $c^{-1}(v_{j+1})$.

\paragraph{Acknowledgments.}  

Much of this research was conducted at the Institute for Quantum Computing, University of Waterloo. A.B.~would like to thank Andrew Childs for hospitality; Andrew Childs, Robin Kothari, and Rajat Mittal for many useful discussions; and Jim Geelen for sharing the example in \reffig{skew}.  

A.B.~has been supported by the European Social Fund within the project ``Support for Doctoral Studies at University of Latvia.''  B.R.~acknowledges support from NSERC, ARO-DTO and Mitacs.

\bibliographystyle{alpha-eprint}
\bibliography{q}

\end{document}